\pgfplotsset{compat=newest}
\newtheorem{remark}{\bfseries Remark}
\newtheorem{theorem}{\bfseries Theorem}
\newtheorem{lemma}{\bfseries Lemma}
\newtheorem{assumption}{\bfseries Assumption}
\newtheorem{corollary}{\bfseries Corollary}
\let\mathbb=\mathds 
\def\mb{\mathbf}
\def\mc{\mathcal}
\def\Y{\mathcal{Y}}
\definecolor{themisblue}{rgb}{0.76, 0.89, 1.0}
\begin{document}
%
%
%
%
\title{ \bf \textsf{DTAC-ADMM}: Delay-Tolerant Augmented Consensus ADMM-based Algorithm for Distributed Resource Allocation}

\author{Mohammadreza  Doostmohammadian, Wei Jiang, and Themistoklis Charalambous
\thanks{The authors are with the School of Electrical Engineering at Aalto University, Finland (\texttt{name.surname@aalto.fi}).
M.  Doostmohammadian is also with the Faculty of Mechanical Engineering, Semnan University, Iran (\texttt{doost@semnan.ac.ir}). 
T. Charalambous is also with the Electrical and Computer Engineering Department, University of Cyprus, Cyprus, \texttt{surname.name@ucy.ac.cy}
}}
\maketitle
\thispagestyle{empty}

\maketitle
\thispagestyle{empty}
\pagestyle{empty}

%
%
%
%
\begin{abstract}
Latency is inherent in almost all real-world networked applications. In this paper, we propose a distributed allocation strategy over multi-agent networks with delayed communications. The state of each agent (or node) represents its share of assigned resources out of a fixed amount (equal to overall demand). Every node locally updates its state toward optimizing a global allocation cost function via received information of its neighbouring nodes even when the data exchange over the network is heterogeneously delayed at different links. The update is based on the alternating direction method of multipliers (ADMM) formulation subject to both sum-preserving coupling-constraint and local box-constraints. The solution is derivative-free and holds for general (not necessarily differentiable) convex cost models. We use the notion of augmented consensus over undirected networks to model delayed information exchange and for convergence analysis. We simulate our \textit{delay-tolerant} algorithm for optimal energy reservation-production scheduling. 
\end{abstract}

\begin{keywords}
Heterogeneous delays, distributed optimization, ADMM, resource allocation
\end{keywords}

%
%
%
%
\section{Introduction}
\label{sec_intro}
Resource allocation refers to the optimal assignment of the (fixed) amount of resources minimizing certain resource-related costs. 
Distributed algorithms to solve resource allocation problems over a network have recently emerged in literature, primarily motivated by advances in distributed data processing and cloud computing (instead of centralized single-node solutions) to handle more complex optimization problems and large-scale data sets. Applications include distributed energy resource provisioning for ancillary services \cite{Themis_allerton}, automatic generation control over the power grid \cite{mrd20211st}, and CPU scheduling over networked data centres \cite{rikos2021optimal}.  

Existing distributed solutions span from primary linear Laplacian-gradient \cite{shames2011accelerated,boyd2006optimal, cherukuri2015distributed} strategies with all-time feasibility to more recent primal-dual solutions \cite{aybat2016distributed,Nedic2018,falsone2020tracking} reaching feasibility asymptotically over time. The Laplacian-gradient solutions are known to preserve anytime-feasibility even in the presence of model nonlinearities (over uniformly-connected networks) \cite{fast,mrd20211st}, where the results can further address quantization/saturation nonlinearities \cite{rikos2021optimal,fast,wu2020distributed} and finite/fixed-time convergence \cite{garg_cdc,taes2020finite}. The feasibility constraint defines the balance between the assigned resources and the overall demand. This global constraint couples the states of all computing nodes and thus is referred to as the \textit{coupling-constraint} \cite{falsone2020tracking}. For example, in the generator coordination and economic dispatch problem (EDP), the generated power (assigned resources) needs to meet the grid's power demand. Otherwise, violating the solution feasibility may lead to disruption in the service and even system breakdown  \cite{wu2021new}. In this sense, with feasible initialization, the outcome of these algorithms is feasible at any termination time. 
On the other hand, primal-dual solutions need to reach feasibility (sufficiently fast) with no need for feasible initialization.
In this scenario, \textit{alternating direction method of multipliers (ADMM)} can be adopted to divide the main optimization problem into sub-problems to be locally optimized at each node \cite{banjac2019decentralized,wu2021new,falsone2020tracking,carli_admm,chang_admm,arxiv_digraph,Teixeira_admm,Wei_cdc}. In general, ADMM-based solutions, being gradient-free, make less restrictive assumptions on the smoothness of the objective function. Some ADMM formulations, further, allow for handling the local constraint on the state (the \textit{box-constraint}) at each node, e.g., via first-order proximal approximation of the cost for min-max problem \cite{chang_admm}. Other works consider penalty (or barrier) functions to address this constraint \cite{cherukuri2015distributed,wu2021new,fast,mrd20211st}.

In this work, reformulating the primal distributed resource allocation (DRA) problem to its dual form and assuming standard strong-duality, we use the ADMM-based method inspired by \cite{falsone2020tracking} to find the optimal solution in a distributed heterogeneous way. As the main contribution, we propose a \textit{delay-tolerant} algorithm in case the communications among different nodes are subject to different time delays. These heterogeneous delays (at different links) imply heterogeneous communications over the network. In this sense, this work improves the recent primal-dual algorithms over non-delayed networks \cite{Wei_cdc,banjac2019decentralized,wu2021new,falsone2020tracking,carli_admm,chang_admm} to more advanced delay-tolerant resource allocation solutions. Combining the notions of augmented consensus and ADMM, 
the proposed \textsf{DTAC-ADMM} algorithm is single time-scale, i.e., it performs \textit{only one step of (augmented) consensus on (delayed) information per ADMM iteration}. This is computationally more efficient than double time-scale solutions with many steps of consensus (as inner-loop) per iteration \cite{arxiv_digraph}. Further, \textsf{DTAC-ADMM} is initialization-free (with no need for primal feasibility), i.e., it does not require the initial states to satisfy the feasibility constraint but gains feasibility over time. This improves \cite{shames2011accelerated,boyd2006optimal,cherukuri2015distributed} which consider specific initialization in order to properly run their main allocation in the upcoming steps. Similar to the (continuous-time) dynamics in \cite{wang2019distributed}, \textsf{DTAC-ADMM} can handle large delays, but without need for specific switching techniques which may cause feasibility-gap. 
We use \textsf{DTAC-ADMM} for scheduling over the energy grid to optimally allocate the share of energy reservation and generation.

\textbf{Organization:} Section \ref{sec_prob} formulates the DRA problem (both primal and dual formulations). Section~\ref{sec_sync} and \ref{sec_async} respectively present the ADMM solution under homogeneous and heterogeneous delays. Applications and simulations are given in Sections~\ref{sec_app} and~\ref{sec_sim}. Section~\ref{sec_con} concludes the paper.  

\textbf{Notations:} 
We use small letters for scalars, small bold letters for vectors, and underlined small bold letters for augmented vectors. Capital letters denote matrices. Superscript $k$ denotes the discrete time index. $\|\cdot\|$, ``;'', $\otimes$ denote the $2$-norm, column concatenation, and Kronecker matrix product. $\mb{1}_m$ ($\mb{0}_m$) and $\mb{1}_{m\times m}$ ($\mb{0}_{m\times m}$) denote all ones (and zeros) column vector and matrix of size $m$. $\mb{I}_{m}$ denotes the size $m$ identity matrix. PD (PSD) stands for positive (semi-) definite. 

\section{Problem formulation} \label{sec_prob}
Consider a group of $n$ agents, each associated with a local cost $\phi_i(y_i):\mathbb{R} \rightarrow \mathbb{R}$\footnote{The results can be generalized to $\mathbb{R}^p,p>1$, and considering $p=1$ is only for convenience in presenting the results.}. State $y_i$ represents the allocated resources to agent $i$ where the overall resources are constant for the group of agents. 
Then, DRA is formulated as,
\vspace{-0.2cm}
\begin{equation}\label{problem_initial_0}
	\begin{aligned}
		\operatorname*{min}_{\mb{y} = [y_1;\dots;y_n]  \in \mathbb{R}^{n} }   & \Phi(\mb{y})  =  \operatorname*{min}_{y_i\in \mathbb{R} }  \sum_{i=1}^{n} \phi_i(y_i), \\
		\mbox{subject~to}&\sum_{i=1}^n ({y}_i - b_i)=0,	~ y_i \in \Y_i
	\end{aligned}
\end{equation}
with $b_i\in \mathbb{R}$ and $\Y_i$ as a closed convex set (the local box-constraint) at agent $i$. These box-constraints in the form $m_i \leq y_i \leq M_i$,
limit the amount of resources at each agent. One can address these constraints either \emph{(i)}  by adding penalty/barrier functions \cite{fast,mrd20211st,wu2020distributed} or \emph{(ii)} by directly applying it in the ADMM formulation \cite{falsone2020tracking}. Note that problem~\eqref{problem_initial_0} differs from equality-constraint problems \cite{Wei_cdc,Teixeira_admm,arxiv_digraph} due to its sum-preserving constraint.

\begin{assumption}\label{assup_convex}
	Each cost function $\phi_i : \mathbb{R} \rightarrow \mathbb{R} \cup \{+\infty\}$ is closed, proper and convex (not necessarily differentiable).
\end{assumption}
The cost is not restricted to quadratic form as in  \cite{rikos2021optimal,Teixeira_admm}.
The Lagrangian form after incorporating the feasibility constraint is
\begin{equation}\label{lagrang_initial}
L(x,y) = \sum_{i=1}^{n} \phi_i(y_i) + x \sum_{i=1}^n ({y}_i - b_i),
\end{equation}
which gives the dual formulation of \eqref{problem_initial_0} as
\begin{equation} \label{problem_initial}
	\begin{aligned} 
		\max_{x\in \mathbb{R}} \min_{y_i\in \Y_i} L(x,y) &= \max_{x\in \mathbb{R}} F(x) = \max_{x\in \mathbb{R}} \sum_{i=1}^{n} f_i(x)
		\\
		f_i(x) = &\min_{y_i\in \Y_i}  (\phi_i(y_i) +x(y_i-b_i)).
	\end{aligned}
\end{equation}
In this new setup, $ x \in \mathbb{R} $ (as the Lagrange parameter) is a global optimization variable (a common decision variable) and $f_i$ is known to agent $i$ only. One can prove strong duality between the two formulations \eqref{problem_initial_0} and \eqref{problem_initial}, which in turn implies \textit{zero duality gap} between the two problems. This directly follows from \textit{Slater's theorem} giving sufficient condition, saying that strong duality holds if: \emph{(i)}  the primal problem is convex; and \emph{(ii)}  the strict feasibility holds \cite{boyd2004convex}. These are the assumption that we discuss next.


\begin{assumption}[Strict feasibility] \label{ass_feasible} 
	There exists feasible point $\mb{y}$ in the primal formulation~\eqref{problem_initial_0} that satisfies the coupling constraint $\sum_{i=1}^n (y_i - b_i)=0$ and  \textit{strictly} satisfies $y_i \in \Y_i$. The latter implies that, for the scalar $y_i$, the inequality $m_i < y_i < M_i$  strictly holds.
\end{assumption}

\begin{assumption}[Stong Duality] \label{ass_dual}
	There is \textit{zero duality gap} between primal and dual problems \eqref{problem_initial_0} and \eqref{problem_initial} as a follow-up to Assumptions~\ref{assup_convex} and \ref{ass_feasible} \cite[Section~5.2.3]{boyd2004convex}.
\end{assumption}

Note that Assumptions~\ref{ass_feasible} and \ref{ass_dual} are standard in ADMM literature \cite{banjac2019decentralized}.
The idea in this paper is to apply the ADMM structure to solve the dual problem  \eqref{problem_initial}, i.e., to find its optimal solution $\mb{y}^*=[y_1^*;y_2^*;\dots;y_n^*]$, under the given assumptions on the convexity and feasibility, over a network and in a distributed way. We consider a connected undirected network $\mc{G}$ with symmetric and bi-stochastic link weights. The matrix $W$ denotes the PSD bi-stochastic symmetric adjacency matrix (for consensus). To solve the DRA problem cooperatively, agents share their local data over $\mc{G}$ with their neighbours $\mc{N}_i$ and receive information from the same neighbours $\mc{N}_i$. This information sharing is \textit{local}  and possibly \textit{delayed}, with no global knowledge (of the overall cost $\Phi$) involved. 
\begin{assumption} \label{ass_net}
	The network $\mc{G}$ is undirected connected.
\end{assumption}
A simple choice of PSD adjacency matrix $W$ satisfying Assumption~\ref{ass_net} is to set the link weights equal to $\frac{1}{|\mc{N}_i|+1}$ \cite{Themis_delay} and, then, put $W = \frac{1}{2}(W+\mb{I}_n)$ \cite{falsone2020tracking}.  

Next, we present two existing ADMM solutions to solve problem \eqref{problem_initial_0} (and \eqref{problem_initial}) in presence of no delay over the network. 

\textsf{Parallel-ADMM}: First, recall on a semi-centralized algorithm proposed by \cite{bertsekas1989parallel}. Assume node $i$ updates its state as 
\begin{align}  \label{eq_paral_y}
    y_i^{k+1} = \operatorname*{argmin}_{y_i\in \Y_i} \left(\phi_i(y_i) + x_i y_i  + \frac{c}{2} \|y_i - y_i^k + d^k \|^2 \right)
\end{align}
where the last term penalizes the non-feasibility by factor $c>0$ via the following centralized variables
\begin{align} \label{eq_paral_d}
   d^{k+1} & = \frac{1}{n} \sum_{i=1}^n (y_i^{k+1}-b_i), \\ \label{eq_paral_x}
   x^{k+1} & = x^k + c d_{k+1}.
\end{align}

\textsf{Distributed-Parallel-ADMM}: Adopting the standard dynamic average consensus techniques, a distributed version of \eqref{eq_paral_y}-\eqref{eq_paral_x} is proposed in \cite{falsone2020tracking} as
\begin{align} 
\nonumber
    y_i^{k+1} &= \operatorname*{argmin}_{y_i\in \Y_i} \Big(\phi_i(y_i) + \sum_{j=1}^n W_{ij} x_j^k y_i  \\   \label{eq_nodelay_y}
    &+ \frac{c}{2} (y_i - y_i^k + \sum_{j=1}^n W_{ij} d_j^{k})^2 \Big), \\ \label{eq_nodelay_d}
   d_i^{k+1} & = \sum_{j=1}^n W_{ij} d_j^{k} + (y_i^{k+1}-b_i) - (y_i^{k}-b_i), \\   \label{eq_nodelay_x}
   x_i^{k+1} & = \sum_{j=1}^n W_{ij} x_j^k + c d_i^{k+1},
\end{align}
by introducing local variables $x_i^k$ and $d_i^k$ instead of the centralized variables $x^k$ and $d^k$ in \eqref{eq_paral_x} and \eqref{eq_paral_d}, respectively.

\section{ADMM Protocol for Homogeneous Delays} \label{sec_sync}
First, we provide a synchronous update over the network to address network latency. This enlightens the analysis and proofs for the main algorithm in Section~\ref{sec_async} considering heterogeneous delays.
Assuming a homogeneous delay $\overline{\tau}$ at all links, the state-update follows as
\begin{align} \nonumber
    y_i^{k+1} &= \operatorname*{argmin}_{y_i\in \Y_i} \Big(\phi_i(y_i) + \sum_{j=1}^n W_{ij} x_j^{k-\overline{\tau}}  y_i  \\ \label{eq_y_sync}
    &+ \frac{c}{2} \Big(y_i - y_i^k + \sum_{j=1}^n  W_{ij} d_j^{k-\overline{\tau}} \Big)^2 \Big), \\ \label{eq_d_sync}
   d_i^{k+1} & = \sum_{j=1}^n  W_{ij} d_j^{k-\overline{\tau}} + (y_i^{k+1}-b_i) - (y_i^{k}-b_i), \\ \label{eq_x_sync}
   x_i^{k+1} & = \sum_{j=1}^n W_{ij} x_j^{k-\overline{\tau}} + c d_i^{k+1}.
\end{align}
We simplify this formulation in compact vector form as   

\small   \begin{align} \label{eq_y_sync_vect}
    \mb{y}^{k+1} & = \operatorname*{argmin}_{\mb{y} \in \Y} \Big(\Phi(\mb{y}) + \mb{y}^\top W \mb{x}^{k-\overline{\tau}} + \frac{c}{2} \|\mb{y} - \mb{y}^k + W \mb{d}^{k-\overline{\tau}} \|^2 \Big),\\ \label{eq_d_sync_aug}
	\mb{d}^{k+1} &= W \mb{d}^{k-\overline{\tau}} +  \mb{y}^{k+1}-\mb{y}^{k} , \\ \label{eq_x_sync_aug}
	\mb{x}^{k+1} &= W \mb{x}^{k} + c \mb{d}^{k-\overline{\tau}},
\end{align}    \normalsize
with $\mb{x}^{k},\mb{d}^{k+1}$ as column concatenation of $x_i^k,d_i^k$. 
This scenario is equivalent to updating at a longer iteration (scaled by the delay-bound $\overline{\tau}$) than the current iteration  $k$, see \cite[Remark~3]{Themis_delay}.
.
\section{Augmented Consensus ADMM: Heterogeneous Delays } \label{sec_async}
In this section, we move one step further and assume that the data exchange is subject to generally \textit{heterogeneous} networking delays satisfying the following assumption.
\begin{assumption} \label{ass_delay}
	The delays are assumed heterogeneous at different links (but symmetric over the same undirected link), arbitrary, bounded, and time-invariant.
\end{assumption} 

The time delay over (both sides of) the link between node $i$ and $j$ is assumed an integer value satisfying $0\leq \tau_{ij} = \overline{\tau}_{ji} \leq \overline{\tau}$. The bound $\overline{\tau}$ is only to ensure no packet loss over the network as discussed in \cite{Themis_allerton,Themis_delay}.
Assuming generally non-equal communication delays at different links implies the heterogeneous scenario in this work.  
Due to this heterogeneity, for two nodes $j,l \in \mc{N}_i$, time delay $\tau_{ij}$ differs from $\tau_{il}$, in general.
In this sense, Assumption~\ref{ass_delay} is not restrictive over non-switching networks with fixed nodes, e.g., static sensor networks \cite{delay_est}. This is well-justified over fading channels as in information-theoretic perspective assuming when the data leaves the buffer reaches the destination with a fixed delay \cite{995554}. First, define the indicator function $\mc{I}_{k,ij}(r)$ as \cite{Themis_delay}  
\begin{equation}
	\mc{I}_{k,ij}(r) = \left\{
	\begin{array}{ll}
		1, & \text{if}~ \tau_{ij}[k]=r  \\
		0, & \text{otherwise}.
	\end{array}\right.
\end{equation}
Following Assumption~\ref{ass_delay},
$\sum_{r=0}^{\overline{\tau}} \mc{I}_{k-r,ij}(r) = 1$ for two neighbors $i,j$. 
Using this definition, we propose the following heterogeneous ADMM tracking protocol to solve the DRA problem, referred to as the \textsf{DTAC-ADMM} algorithm throughout the paper. Every node $i$ updates its state according to the following discrete dynamics
\small \begin{align} \nonumber
    y_i^{k+1} &= \operatorname*{argmin}_{y_i\in \Y_i} \Big(\phi_i(y_i) + \sum_{j=1}^n \sum_{r=0}^{\overline{\tau}} W_{ij} x_j^{k-r} \mc{I}_{k-r,ij}(r) y_i  \\ \label{eq_y}
    &+ \frac{c}{2} \Big(y_i - y_i^k + \sum_{j=1}^n \sum_{r=0}^{\overline{\tau}} W_{ij} d_j^{k-r} \mc{I}_{k-r,ij}(r)\Big)^2 \Big), 
    \\ \label{eq_d}
   d_i^{k+1} & = \sum_{j=1}^n \sum_{r=0}^{\overline{\tau}} W_{ij} d_j^{k-r} \mc{I}_{k-r,ij}(r) + (y_i^{k+1}-y_i^{k}), \\ \label{eq_x}
   x_i^{k+1} & = \sum_{j=1}^n \sum_{r=0}^{\overline{\tau}} W_{ij} x_j^{k-r} \mc{I}_{k-r,ij}(r) + c d_i^{k+1},
\end{align} \normalsize
with the \textit{local initialization} at every node $i$ as
\begin{align} \label{eq_ini}
     y_i^0 \in [m_i~M_i],~
    d_i^{0} = y_i^0 - b_i, ~
    x_i^{0} = 0, 
\end{align}
where $d_i^{0}$ denotes the initial  \textit{feasibility-deviation} at node $i$, and in general is non-zero. This implies that our solution, summarized in Algorithm~\ref{alg_ac}, is initialization-free. This is an improvement over Laplacian-gradient methods, e.g., \cite{cherukuri2015distributed,boyd2006optimal}.
\begin{algorithm}[h!] 
	\caption{\textsf{DTAC-ADMM}} 
	\begin{algorithmic}[0] 
		\STATE \textbf{Input:}  $W_{ij}$, $\mc{N}_i$, $\phi_i(\cdot)$, $b_i$, $c$ \\
		\STATE   \textbf{Initialization:} Every node $i$ sets $m_i \leq y_i^0 \leq M_i$ randomly, $d_i^0 = y_i^0 - b_i$, $x_i^0 = 0$, and $k=0$ \\
        \While {termination criteria NOT true}
		\STATE Each node $i$ receives $y_j^{k-\tau_{ij}},d_j^{k-\tau_{ij}}$ from $j \in \mc{N}_i$
        \STATE Calculates Eqs.~\eqref{eq_y}-\eqref{eq_x}
        \STATE Shares $y_i^{k+1},d_i^{k+1}$ with neighboring nodes $j \in \mc{N}_i$
        \STATE Sets $k \leftarrow k+1$
   	    \STATE \textbf{Output:}  $\mb{y}^{k+1}$, $\mb{d}^{k+1}$, $\Phi(\mb{y}^{k+1}) = \sum_{i=1}^{n} \phi_i(y^{k+1}_i)$
 	    \end{algorithmic}  
   	    \label{alg_ac}
\end{algorithm} \vspace{-0.5cm}


\subsection{Augmented Consensus Formulation}
Here, we propose an augmented formulation to study the properties of the proposed optimization protocol \eqref{eq_y}-\eqref{eq_x}. This is similar to the ideas given in consensus literature \cite{Themis_allerton,Themis_delay}. 
Following the notations in \cite{delay_est}, 
define the augmented vectors of size $n(\overline{\tau}+1)$ as $\underline{\mb{d}}^{k} :=[\mb{d}^{k}; \mb{d}^{k-1}; \dots ; \mb{d}^{k-\overline{\tau}}]$, $\underline{\mb{y}}^{k} := \mb{1}_{\overline{\tau}+1} \otimes \mb{y}^{k}$, and $\underline{\mb{x}}^{k} :=[\mb{x}^{k}; \mb{x}^{k-1}; \dots ; \mb{x}^{k-\overline{\tau}}]$. The compact vector form of \eqref{eq_y} is 

\small \begin{align} \label{eq_y_vect}
    \mb{y}^{k+1} & = \operatorname*{argmin}_{\mb{y} \in \Y} \Big(\Phi(\mb{y}) + \mb{y}^\top \pmb{\eta}^{k} + \frac{c}{2} \|\mb{y} - \mb{y}^k + \pmb{\delta}^{k} \|^2 \Big), 
\end{align} \normalsize
and $\underline{\mb{y}}^{k}$ denote the column concatenation of $\mb{y}^{k}$. Vectors $\pmb{\eta}^{k} = [\eta_1^{k};\dots;\eta_n^k]$ and $\pmb{\delta}^{k} = [\delta_1^{k};\dots;\delta_n^k]$ are defined later. Eq.~\eqref{eq_y} is only for notation compactness and, in application, every node $i$ solves minimization problem \eqref{eq_y} locally in its own neighborhood.
Next, define the $n(\overline{\tau}+1) \times n(\overline{\tau}+1)$ matrix $\overline{PW}$ as the augmented version of $W$ via the $0$-$1$ \textit{delay matrix} $P_r$, $r=0,\dots,\overline{\tau}$.

\small \begin{align} \label{eq_aug_WA}
	\overline{PW} = \left( 
	\begin{array}{cccccc}
		P_0 \circ W  & P_1 \circ W  &  \hdots & P_{\overline{\tau}-1} \circ W  & P_{\overline{\tau}} \circ W \\
		\mb{I}_n &   \mb{0}_{n\times n}  &\hdots  & \mb{0}_{n\times n} & \mb{0}_{n\times n}\\
		\mb{0}_{n\times n} & \mb{I}_n &   \hdots  & \mb{0}_{n\times n} & \mb{0}_{n\times n} \\
		\vdots & \vdots &  \ddots & \vdots & \vdots \\
		\mb{0}_{n\times n} & \mb{0}_{n\times n} &  \hdots & \mb{I}_n & \mb{0}_{n\times n}
	\end{array}	
	\right),
\end{align} \normalsize 
with $\circ$ as the element-wise (or Hadamard) product. The non-zero entry in row $i$ and column $j$ of matrix $P_r$ implies that $\mc{I}_{k-r,ij}(r) = 1$, i.e., the associated delay to the link between node $j$ and $i$ is equal to $r$. Assuming the same delay at both sides also implies that $\mc{I}_{k-r,ji}(r) = 1$.

\begin{remark} \label{rem_augP}
The following properties hold for the augmented formulation and the augmented matrix \eqref{eq_aug_WA},
\begin{enumerate}
    \item $P_i$s are $0$-$1$ \textit{symmetric} matrices  satisfying $W = \sum_{i=0}^{\overline{\tau}} (P_i \circ W)$.
\item From the bi-stochasticity of $W$, matrix $\overline{PW}$ is row-stochastic with symmetric bi-stochastic blocks $P_i \circ W$.
\item The entries of $\overline{PW}$ are non-negative.
\end{enumerate}
\end{remark}

Define $\underline{\pmb{\delta}}^{k} := [\pmb{\delta}^{k};\dots;\pmb{\delta}^{k-\overline{\tau}}] = \overline{PW} \underline{\mb{d}}^{k}$ and $\underline{\pmb{\eta}}^{k} := [\pmb{\eta}^{k};\dots;\pmb{\eta}^{k-\overline{\tau}}] = \overline{PW} \underline{\mb{x}}^{k}$. 
Then, the augmented (simplified) version of \eqref{eq_d}-\eqref{eq_x} 
is in the form
\begin{align} \label{eq_d_aug}
	\underline{\mb{d}}^{k+1} &= \underline{\pmb{\delta}}^{k} + \mb{u}^{\overline{\tau}+1}_i \otimes \Xi^m_{i,\overline{\tau}} (\underline{\mb{y}}^{k+1} - \underline{\mb{y}}^{k}), \\ \label{eq_x_aug}
	\underline{\mb{x}}^{k+1} &=  \underline{\pmb{\eta}}^{k} + c  \underline{\mb{d}}^{k+1},
\end{align} 
with the  auxiliary $m \times (\overline{\tau}+1)m$ matrix  $\Xi^m_{i,\overline{\tau}}$  defined as
$\Xi^m_{i,\overline{\tau}} = (\mb{u}^{\overline{\tau}+1}_i \otimes I_m)^\top $
and $\mb{u}^{\overline{\tau}+1}_i$ as the unit column-vector of the $i$'th coordinate ($1\leq i \leq {\overline{\tau}+1}$). 
Then, the augmented form of the initialization is as follows
\begin{align} \label{eq_ini_d}
    \underline{\mb{d}}^{0} = \mb{1}_{\overline{\tau}+1} \otimes \left(\mb{y}^0-\mb{b}\right) = \underline{\mb{y}}^0- \underline{\mb{b}},
\end{align}
with $\mb{b}=[b_1;\dots,b_n]$ and $\underline{\mb{b}} := \mb{1}_{\overline{\tau}+1} \otimes \mb{b}$.
Recall that the variable $\mb{d}$ (and its augmented form $\underline{\mb{d}}$) tracks the feasibility constraint and \eqref{eq_ini_d} assigns the initial feasibility-deviation of the states to $\mb{d}^0$. In this sense, the proposed ADMM strategy is initialization-free. In other words, the algorithm does not need initial values of $\mb{y}^0$ to satisfy the feasibility constraint necessarily, and the solution reaches feasibility over time.  


\subsection{Convergence Analysis}

For the sake of convergence analysis, and following the dynamics~\eqref{eq_d} and \eqref{eq_x}, we first define the average quantities $\overline{d}^k $ and $\overline{x}^k$. Recall that  $\overline{d}^k $ and $\overline{x}^k$ track the mean feasibility-deviation and Lagrangian multiplier over time $k$. Following the \textsf{Parallel-ADMM} formulation in Section~\ref{sec_prob}, $x^k$ is a global variable in ADMM setup and all nodes need to reach consensus on this common variable. Then, following Remark~\ref{rem_augP}

\small \begin{align} \nonumber
    \overline{d}^{k+1} &= \frac{\mb{1}_{n(\overline{\tau}+1)}}{n(\overline{\tau}+1)} (\underline{\mb{y}}^{k} - \underline{\mb{b}}) =\frac{\mb{1}_n^\top}{n} \Xi^m_{i,\overline{\tau}} \overline{PW} (\underline{\mb{y}}^{k+1}- \underline{\mb{b}}) \\ \nonumber
    &= \frac{\mb{1}_n^\top}{n} \Xi^m_{i,\overline{\tau}} \overline{PW} (\underline{\mb{y}}^{k+1}- \underline{\mb{b}} \pm \underline{\mb{y}}^{k}) \\ \nonumber
    &= \overline{d}^{k} + \frac{\mb{1}_n^\top}{n} \Xi^m_{i,\overline{\tau}} \overline{PW} (\underline{\mb{y}}^{k+1} - \underline{\mb{y}}^{k}) \\
    &= \frac{\mb{1}_n^\top}{n} \Xi^m_{i,\overline{\tau}} \overline{PW} (\underline{\mb{y}}^{k+1} - \underline{\mb{b}}) =  \frac{\mb{1}_n^\top}{n} (\mb{y}^{k+1} - \mb{b}),
\end{align} \normalsize
where the last equation follows from the given initialization \eqref{eq_ini_d} and proves by induction as $\overline{d}^{0} = \frac{\mb{1}_n^\top}{n} (\mb{y}^{0} - \mb{b})$ 
and $\overline{\mb{d}}_{\overline{\tau}}^{k+1} $ and $\overline{\mb{x}}_{\overline{\tau}}^{k+1}$ as the augmented vectors of size $n(\overline{\tau}+1)$ with all entries equal to $\overline{d}^{k+1}$ and $\overline{x}^{k+1}$, respectively.
Define the (augmented) optimization error vectors as
\begin{align}
    \underline{\mb{e}}_x^k = \underline{\mb{x}}^k - \overline{\mb{x}}_{\overline{\tau}}^k,~
    \underline{\mb{e}}_d^k = \underline{\mb{d}}^k - \overline{\mb{d}}_{\overline{\tau}}^k,~
    \underline{\mb{z}}^k = \underline{\mb{y}}^k - \overline{\mb{d}}_{\overline{\tau}}^k,
\end{align}
where the last two equations track the (augmented) feasibility error. This is because
\begin{align}
     \overline{\mb{d}}_{0}^{k+1} = \Xi^n_{1,\overline{\tau}} \overline{PW} \underline{\mb{d}}^{k} +  (\Xi^n_{1,\overline{\tau}} - \Xi^n_{2,\overline{\tau}}) \underline{\mb{y}}^{k+1},
\end{align}
and $(\Xi^n_{1,\overline{\tau}} - \Xi^n_{2,\overline{\tau}}) \underline{\mb{y}}^{k+1} = (\Xi^n_{1,\overline{\tau}}\underline{\mb{y}}^{k+1}-\mb{b}) - (\Xi^n_{1,\overline{\tau}}\underline{\mb{y}}^{k}-\mb{b}) $. The error dynamics then follows as
\begin{align} 
    \underline{\mb{e}}_d^{k+1} & = \underline{\mb{d}}^{k+1} - \overline{\mb{d}}_{\overline{\tau}}^{k+1} \label{eq_ed3}\\ 
    & = \overline{PW} \underline{\mb{d}}^{k} + \mb{u}^{\overline{\tau}+1}_1 \otimes (\Xi^n_{1,\overline{\tau}} - \Xi^n_{2,\overline{\tau}}) \underline{\mb{y}}^{k+1} -  \overline{\mb{d}}_{\overline{\tau}}^{k+1} \nonumber\\ \
    & \stackrel{(a)}{=} \overline{PW} \underline{\mb{e}}_d^{k} + \mb{u}^{\overline{\tau}+1}_1 \otimes (\Xi^n_{1,\overline{\tau}} - \Xi^n_{2,\overline{\tau}}) \underline{\mb{y}}^{k+1} - \overline{\mb{d}}_{\overline{\tau}}^{k+1} + \overline{\mb{d}}_{\overline{\tau}}^{k},\nonumber
\end{align}
where, adding $\pm\overline{\mb{d}}_{\overline{\tau}}^{k}$ in step (a), we used the row-stochasticity of $\overline{PW}$ from Remark~\ref{rem_augP}.
For the sake of stability proof analysis, define the \textit{modified} augmented matrix 
\begin{align} \label{eq_pw_tilde}
    \widetilde{PW} = \overline{PW} - \overline{P1},
\end{align}
with $\overline{P1}$ as the augmented version of the stochastic matrix $\frac{1}{n} \mb{1}_{n \times n}$ via the same delay matrix $P$. Similarly, $\widetilde{W} = W - \frac{1}{n} \mb{1}_{n \times n}$.
For the PSD $\widetilde{PW}$, from Frobenius Theorem, $\rho(\widetilde{PW}) \leq \| \widetilde{PW} \| < 1$ since $\|\overline{PW} - \overline{P1}\| < 1$. Define  
$$ \underline{\mb{z}}^{k} =  \mb{u}^{\overline{\tau}+1}_1 \otimes \Xi^n_{1,\overline{\tau}} \underline{\mb{y}}^{k} - \overline{\mb{d}}_{\overline{\tau}}^{k}. $$ 
Then, Eq. \eqref{eq_ed3} can be rewritten as
\begin{align}
    \underline{\mb{e}}_d^{k+1} & =  \widetilde{PW} \underline{\mb{e}}_d^{k} + \underline{\mb{z}}^{k+1} - \underline{\mb{z}}^{k},
\end{align}
since $\overline{P1} \underline{\mb{e}}_d^{k+1} = \overline{P1} (\underline{\mb{d}}^{k+1} - \overline{\mb{d}}_{\overline{\tau}}^{k+1}) = \mb{0}_{n(\overline{\tau}+1)}$, and this added term, only introduced for the sake of proof, has no effect in our algorithm.
Similarly define the average value $\overline{x}^k = \frac{\mb{1}_{n(\overline{\tau}+1)}}{n(\overline{\tau}+1)} \underline{\mb{x}}^k$. Then, from Eq. \eqref{eq_x_aug} and row-stochasticity of $\overline{PW}$, it is easy to show  that $ \overline{x}^{k+1} = \overline{x}^{k} + c\overline{d}^{k}$, and  
\begin{align*}
    \underline{\mb{e}}_x^{k+1} =   \widetilde{PW} \underline{\mb{e}}_x^{k} + c  \underline{\mb{e}}_d^{k+1} = \widetilde{PW} \underline{\mb{e}}_d^{k} + c  \underline{\mb{e}}_d^{k} + c(\underline{\mb{z}}^{k+1} - \underline{\mb{z}}^{k}).
\end{align*}
Rearranging the above terms in its compact form we get 

\small \begin{align} 
	\left( 
	\begin{array}{c}
		\underline{\mb{e}}_x^{k+1} \\
		\underline{c\mb{e}}_d^{k+1}
	\end{array}	\right) &=  
	\left( \begin{array}{cc}
	   \widetilde{PW} & \widetilde{PW} \\
	   \mb{0}	& \widetilde{PW}
	\end{array}	\right) 
	\left( 
	\begin{array}{c}
		\underline{\mb{e}}_x^{k} \\
		c\underline{\mb{e}}_d^{k} 
	\end{array}	\right) + c\left( \begin{array}{c}
	\underline{\mb{z}}^{k+1} - \underline{\mb{z}}^{k} \\
	\underline{\mb{z}}^{k+1} - \underline{\mb{z}}^{k}
	\end{array}	 \right). \label{eq_error_compact}
\end{align} \normalsize
Note that the stability of the error dynamics \eqref{eq_error_compact} is determined by the eigen spectrum of the first row-block of the augmented matrix $\widetilde{PW}$. 
First recall that for bounded $\underline{\mb{x}}^k,\underline{\mb{z}}^k$, from \cite[Lemma~3]{falsone2020tracking}, $\rho(\widetilde{PW}) < 1$ (strict inequality) implies input-to-state stability with respect to $\underline{\mb{z}}^{k+1} - \underline{\mb{z}}^{k}$, i.e., bounded and stable errors $\underline{\mb{e}}^k_d$ and $\underline{\mb{e}}^k_x$ if $\underline{\mb{z}}^{k+1} - \underline{\mb{z}}^{k}$ is bounded and converges to $0$ over time $k$. Note that since $|y^k_i|$ is bounded by $m_i,M_i$ (the box-constraint), hence follows the boundedness of $\underline{\mb{z}}^{k}$ and the errors $\underline{\mb{e}}^k_d$ and $\underline{\mb{e}}^k_x$.

\begin{theorem}[Convergence and Optimality] \label{thm_conv}
Let $\mb{y}^*$ and $x^*$ denote the optimal solution of primal and dual problems \eqref{problem_initial_0} and \eqref{problem_initial}, respectively. Under Assumptions \ref{assup_convex}-\ref{ass_delay} and Algorithm~\ref{alg_ac},
\begin{enumerate}
    \item The error norms $\|\underline{\mb{e}}_d^{k}\|$ and  $\|\underline{\mb{e}}_x^{k}\|$, along with the feasibility-deviaiton term $\overline{d}^k$ converge to zero.
    \item The sequence $\{\|\underline{\mb{x}}^{k} - \underline{\mb{x}}^* \|^2 + c^2 \|\underline{\mb{z}}^{k} - \underline{\mb{y}}^* \|^2 \}$ is bounded and convergent (with $\underline{\mb{x}}^* := \mb{1}_{n(\tau + 1)} x^*$).
    \item The primal-dual optimization variables satisfy $\lim_{k\rightarrow \infty} \mb{y}^k = \mb{y}^* $ and $\lim_{k\rightarrow \infty} \underline{\mb{x}}^k = \underline{\mb{x}}^*$.
\end{enumerate}
\end{theorem}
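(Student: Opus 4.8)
The plan is to prove the three claims in the order (2)$\to$(1)$\to$(3): the bounded, convergent Lyapunov sequence of claim (2) is precisely what supplies the vanishing, bounded forcing term needed to activate the input-to-state stability (ISS) argument behind claim (1), after which the consensus and feasibility limits assemble into claim (3). Throughout I would exploit the structural facts already recorded in Remark~\ref{rem_augP} (symmetry, row-stochasticity, non-negativity of $\overline{PW}$, and $\rho(\widetilde{PW})<1$) together with the saddle-point conditions of $(\mb{y}^*,x^*)$ guaranteed by Assumptions~\ref{ass_feasible}--\ref{ass_dual}.

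First I would establish claim (2) by taking the candidate $V^{k}:=\|\underline{\mb{x}}^{k}-\underline{\mb{x}}^{*}\|^{2}+c^{2}\|\underline{\mb{z}}^{k}-\underline{\mb{y}}^{*}\|^{2}$ and proving a descent inequality $V^{k+1}\le V^{k}-\Gamma^{k}$ with $\Gamma^{k}\ge 0$. The ingredients are the first-order optimality (inclusion) condition of the proximal $\mb{y}$-update \eqref{eq_y_vect}, namely $-\pmb{\eta}^{k}-c(\mb{y}^{k+1}-\mb{y}^{k}+\pmb{\delta}^{k})\in\partial\Phi(\mb{y}^{k+1})+N_{\Y}(\mb{y}^{k+1})$; the stationarity and feasibility conditions satisfied by $(\mb{y}^{*},x^{*})$; and the recursions \eqref{eq_d_aug}--\eqref{eq_x_aug}. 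Substituting the update relations into $V^{k+1}-V^{k}$ and invoking the subgradient-monotonicity inequality $(\mb{y}^{k+1}-\mb{y}^{*})^{\top}(\mb{g}^{k+1}-\mb{g}^{*})\ge 0$ should make the dual cross-terms telescope, leaving $\Gamma^{k}$ as a sum of squared increments together with a consensus-disagreement term and a feasibility-residual term controlling $(\overline{d}^{k})^{2}$. Monotone boundedness of $V^{k}$ then yields convergence of the sequence, and summing the telescoped inequality gives $\sum_{k}\Gamma^{k}<\infty$.

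Next, from $\sum_{k}\Gamma^{k}<\infty$ I would read off that the quantities driving the error dynamics vanish: in particular $\underline{\mb{z}}^{k+1}-\underline{\mb{z}}^{k}\to\mb{0}$ and the mean feasibility-deviation $\overline{d}^{k}\to 0$. Feeding this bounded, vanishing forcing term into the error recursion \eqref{eq_error_compact}, whose system matrix is block upper-triangular with both diagonal blocks equal to $\widetilde{PW}$ and hence of spectral radius $\rho(\widetilde{PW})<1$, the ISS result \cite[Lemma~3]{falsone2020tracking} yields $\|\underline{\mb{e}}_{d}^{k}\|\to 0$ and $\|\underline{\mb{e}}_{x}^{k}\|\to 0$, which is claim (1). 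For claim (3) I would then combine the limits: $\|\underline{\mb{e}}_{x}^{k}\|\to 0$ drives all node copies of the dual variable to their common average, while $\overline{d}^{k}\to 0$ with $\|\underline{\mb{e}}_{d}^{k}\|\to 0$ enforces exact primal feasibility in the limit. A limit-point/KKT closing argument finishes the proof: since $\Gamma^{k}\to 0$, every accumulation point $(\bar{\mb{y}},\bar{x})$ of the trajectory satisfies the stationarity of the argmin fixed point together with feasibility and consensus, hence is a saddle point of \eqref{lagrang_initial}; by strong duality (Assumption~\ref{ass_dual}) this forces $\bar{x}=x^{*}$ and identifies $\bar{\mb{y}}$ as a primal optimizer, so $V^{k}$ admits a subsequence tending to $0$. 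Convergence of $V^{k}$ then gives $V^{\infty}=0$, i.e.\ $\underline{\mb{x}}^{k}\to\underline{\mb{x}}^{*}=\mb{1}_{n(\overline{\tau}+1)}x^{*}$ and $\mb{y}^{k}\to\mb{y}^{*}$.

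I expect the Lyapunov descent of claim (2) to be the main obstacle. The delayed, augmented consensus introduces coupling cross-terms between the dual-error block $\underline{\mb{e}}_{x}$ and the feasibility-error block $\underline{\mb{e}}_{d}$ (the off-diagonal $\widetilde{PW}$ in \eqref{eq_error_compact}), and dominating these cross-terms to produce a genuinely non-negative $\Gamma^{k}$ requires careful use of the symmetry, row-stochasticity and PSD-ness of the augmented weight matrices, rather than the scalar manipulations that suffice in the delay-free setting of \eqref{eq_nodelay_y}--\eqref{eq_nodelay_x}. A secondary subtlety is that, absent strict convexity of $\Phi$, the primal minimizer need not be unique, so the identification in claim (3) must argue that \emph{every} limit point of $\{\mb{y}^{k}\}$ is an optimizer and that the one selected is consistent with the feasibility initialization \eqref{eq_ini_d}.
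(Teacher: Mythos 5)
Your overall architecture---a saddle-point descent inequality coupled with input-to-state stability of the augmented error recursion \eqref{eq_error_compact}---is the right skeleton and matches the paper's, but two steps in your plan are genuine gaps rather than deferred computations. First, you list $\rho(\widetilde{PW})<1$ among ``the structural facts already recorded in Remark~\ref{rem_augP}''. It is not there, and it is not free: $\overline{PW}$ is only row-stochastic, is not symmetric, and its spectrum is not inherited from $W$ in any obvious way, so the delay-free spectral argument for $\widetilde{W}$ does not transfer. Establishing $\rho(\widetilde{PW})\le\rho(\widetilde{W})^{\frac{1}{\overline{\tau}+1}}<1$ is the main technical content of the paper's Lemmas~\ref{lem_synch}, \ref{lem_pwi} and~\ref{lem_asynch}: one relates characteristic polynomials via $\sigma_{\widetilde{PW}}(\lambda)=\sigma_{\widetilde{W}}(\lambda^{\overline{\tau}+1})$ in the homogeneous case and then decomposes $\widetilde{PW}=\sum_{r=0}^{\overline{\tau}}\widetilde{PW}_r$ to cover heterogeneous delays. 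Without some version of this, your ISS step has no hypothesis to stand on.

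Second, the order $(2)\to(1)$ with a self-contained non-negative $\Gamma^k$ is not achievable as described. The inequality that actually falls out of the argmin optimality condition and the saddle-point property is the one in Lemma~\ref{lem_local_opt} and Corollary~\ref{cor_xyz}: it carries the sign-indefinite cross-term $2c(\underline{\mb{z}}^{k+1}-\underline{\mb{y}}^*)^\top\underline{\mb{e}}_x^{k+1}$, so the dual consensus error $\underline{\mb{e}}_x$ appears with no favorable sign and cannot be absorbed into a non-negative $\Gamma^k$ unless you already control it---which is precisely claim (1). Claims (1) and (2) are therefore coupled, and the paper resolves the coupling by forming a joint Lyapunov quantity, the $Q$-weighted norm of $\mb{1}_2\otimes c(\mb{z}^{k}-\mb{y}^*)-\mb{e}^{k}$ in Lemma~\ref{lem_conv_sync}, combining it with Lemma~\ref{lem_local_opt}, carrying out the whole analysis first for homogeneous delays, and only then lifting it to the heterogeneous case through $\underline{F}=\sum_{r}\underline{F}_r$; the boundedness needed for ISS comes cheaply from the box-constraints on $y_i$, not from the descent. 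Your closing limit-point argument for part (3) and the caveat about non-uniqueness of the primal minimizer are sound, but they only become available after the coupled Lyapunov argument has been carried out.
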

\begin{proof}
 See the Appendix.
\end{proof}

\begin{remark} \label{rem_goodnews}
    The good news about \textsf{DTAC-ADMM} is that it is initialization-free as, $\underline{\mb{d}}^{0} \neq \mb{0}$ in Eq.~\eqref{eq_ini_d} implies that the algorithm converges for $\mb{y}(0) \neq \mb{b}$ in general; although, addressing uniform-connectivity, dynamic network topology, and possible model nonlinearities are still open for further research in this scenario.
\end{remark}

\begin{remark} \label{rem_ax}
The problem can be generalized to \textit{weighted} coupling-constraints as
\begin{equation}\label{problem_initial_ax}
	\begin{aligned}
		\operatorname*{min}_{\mb{y}  \in \mathbb{R}^{n} }  & \sum_{i=1}^{n} \Phi(\mb{y})  =  \operatorname*{min}_{y_i\in \mathbb{R} }  \sum_{i=1}^{n} \phi_i(y_i), \\
		\mbox{subject~to} &\sum_{i=1}^n (a_i y_i - b_i)=0,	~ y_i \in \Y_i
	\end{aligned}
\end{equation}
with $a_i \in \mathbb{R}$ as a weighting factor on the state $y_i$. Then, the solution can be adjusted accordingly by substituting $a_i y_i$, $a_i y_i^k$, and $a_i y_i^{k+1}$ in the right-hand-side of the Eq.~\eqref{eq_y} and~\eqref{eq_d} (the cost $\phi_i(y_i)$ in \eqref{eq_y} remains the same).  
An example application is given in the next section.
\end{remark}



\section{Application: Scheduling over Energy Grid} \label{sec_app}
Consider the EDP for a group of $n$ generators tasked to produce power according to a fixed demand over a specific time interval. The demand $b$ may change over certain time intervals, and, thus, one key point is the so-called solution feasibility, referring to the global constraint $\sum_{i=1}^n y_i = b$ to be held at the termination point, implying that the demand $b$ need to meet the produced power $\sum_{i=1}^n y_i$.
The solution can be either (i) all-time feasible or (ii) gain feasibility sufficiently fast before the termination time. Then, the optimal solution is assigned to the generators, optimizing the following power generation cost model
\begin{align} \label{eq_f_quad_edp}
\min_y  &\sum_{i=1}^n \gamma_i y_i^2+ \beta_i y_i + \alpha_i,\\ ~\mbox{subject~to}~&\sum_{i=1}^n y_i = b, ~~ m_i \leq y_i \leq M_i
\end{align}
where parameters $\gamma_i,\beta_i,\alpha_i$ follow the type of the generator $i$ (i.e., oil-fired, coal-fired, etc).

As mentioned in remark~\ref{rem_goodnews}, \textsf{DTAC-ADMM} algorithm is initialization-free and for any non-feasible initial power generation in $[m_i~M_i]$ it can reach the demand value $b = \sum_{i=1}^n b_i$. This is better illustrated in Fig.~\ref{fig_delay} and is an improvement over the existing coordination methods, e.g.,  \cite{cherukuri2015distributed}. We consider a network of $n = 6$ generators connected over a simple cycle network with self weights equal to $0.5$ and other link-weights equal to $0.25$. We run the algorithm over $10^4$ iterations (stopping criteria). Note that in some literature, e.g. \cite{banjac2019decentralized}, the local box-constraints $m_i \leq y_i \leq M_i$ are not addressed, while our solution perfectly handles these local constraints as shown in Fig.~\ref{fig_delay} (primal states). The case of $\overline{\tau} = 0$ refers to the no-delay solution \textsf{Distributed-Parallel-ADMM} by \cite{falsone2020tracking}. Evidently, the performance of our \textit{delay-tolerant} \textsf{DTAC-ADMM} is comparable with \cite{falsone2020tracking} while, further, handling network latency. 
\begin{figure}
	\centering
 	\includegraphics[width=1.67in]{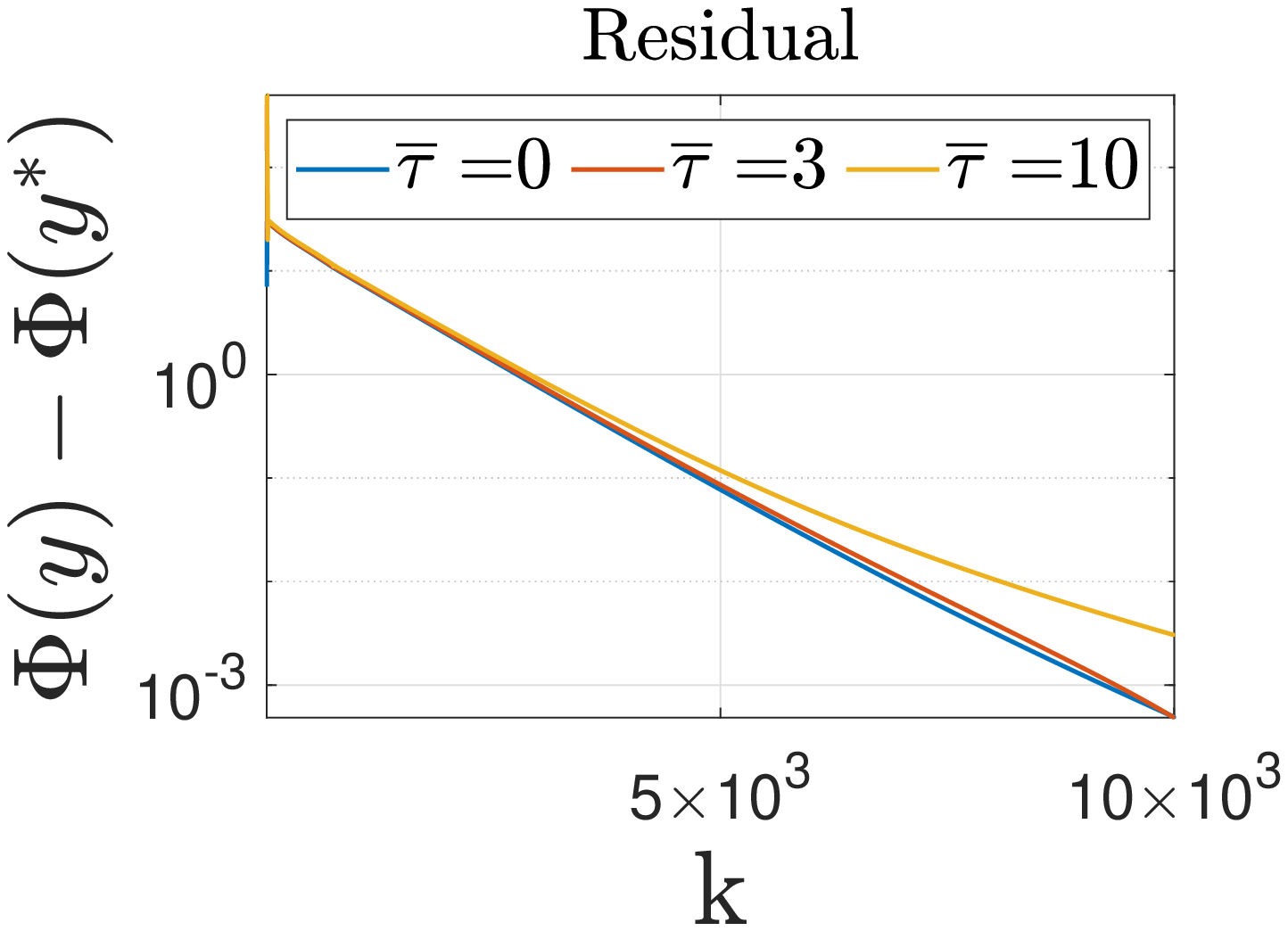}
 	\includegraphics[width=1.67in]{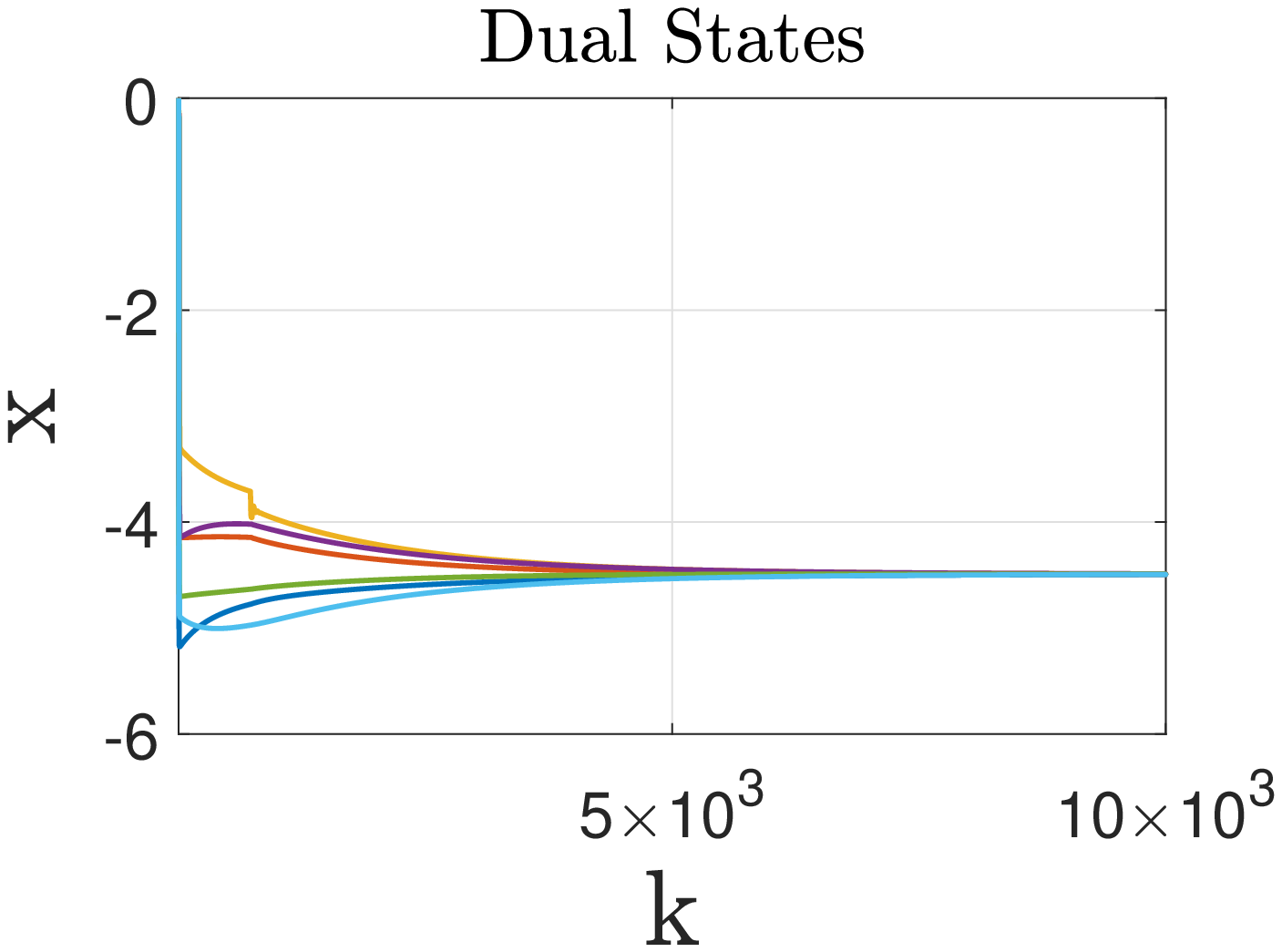}
 	\includegraphics[width=1.67in]{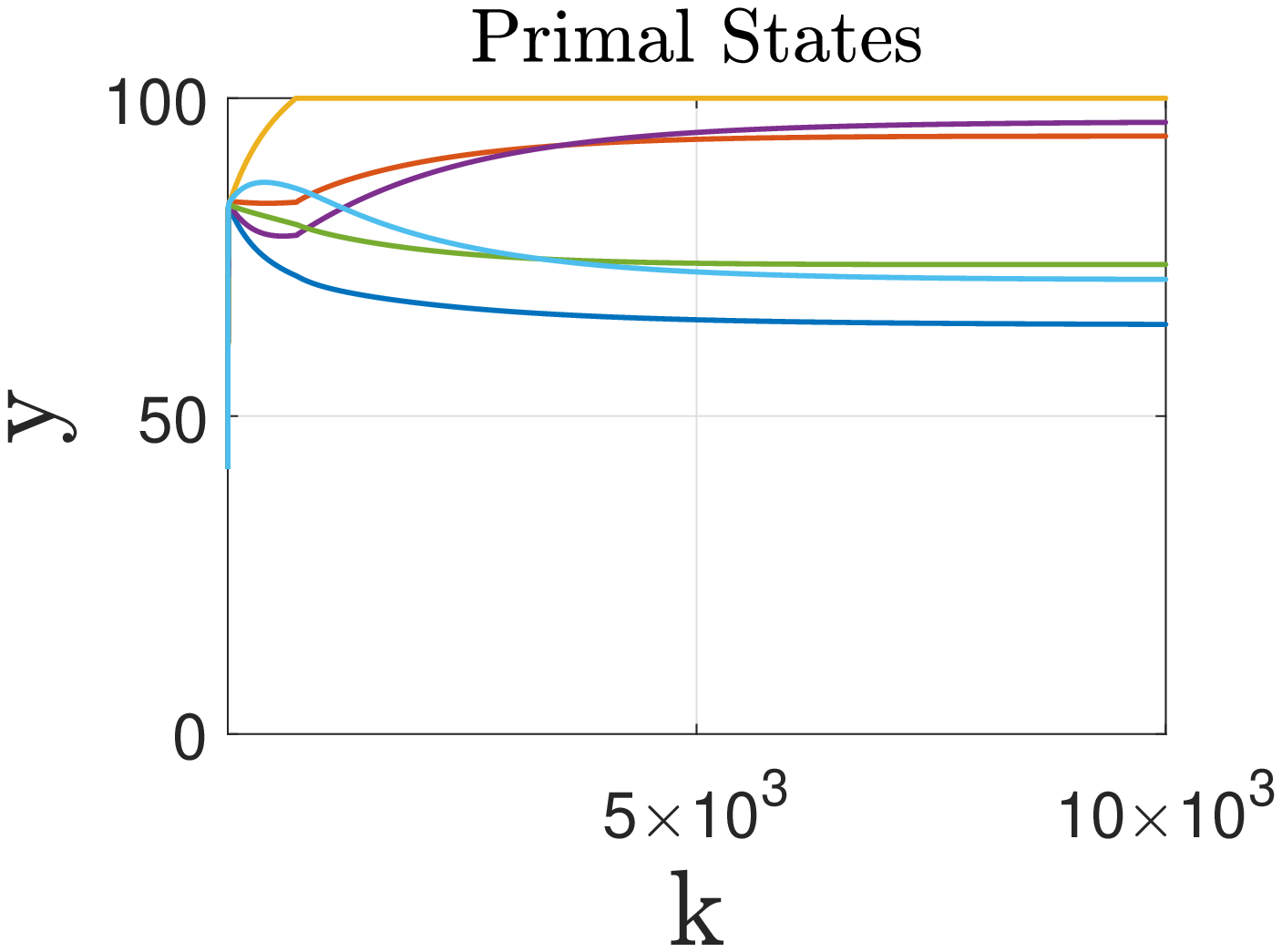}
 	\includegraphics[width=1.67in]{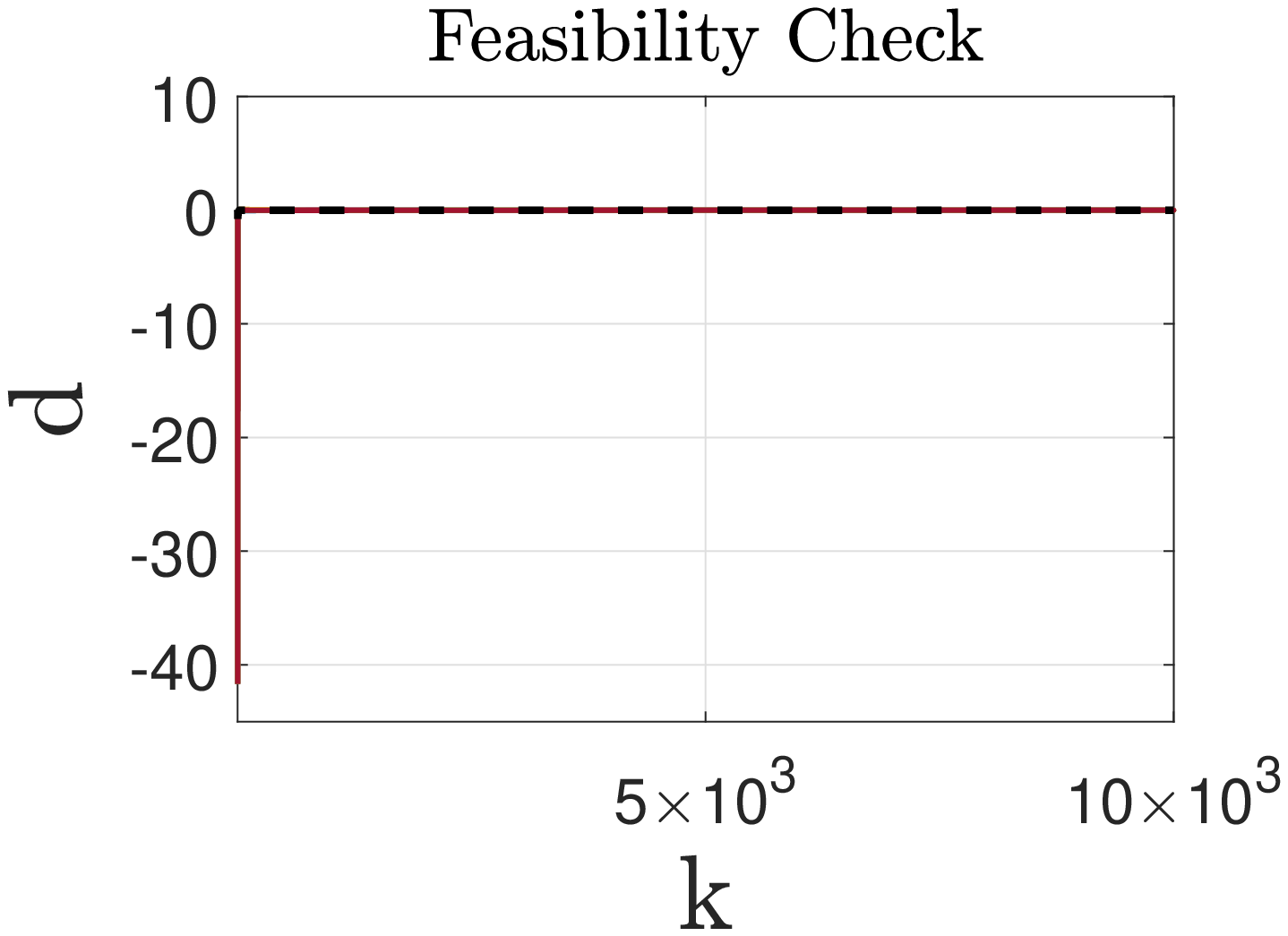}
    \caption{The performance of \textsf{DTAC-ADMM} algorithm is shown in this figure. The parameters are: $b=500$, $m_i=0$, $M_i = 100$, $c=5$, $y_i^0 = \frac{b}{2n}=41.7$. The cost parameters are chosen randomly. The bound on the delays are $\overline{\tau} = 3,10$. The states $y_i$ remain inside the box-constraints, dual states $x_i$ reach consensus on $x^*$, and feasibility-deviations $d_i$ converges to zero. The dashed black line represents $\overline{d}^k$ which converges sufficiently fast to reach feasibility. }
    \label{fig_delay}
\end{figure}

Following the formulation given by \cite{vrakopoulou2017chance,mrd_vtc2022}, next we consider a more general problem: a combination of (thermal) batteries for energy reservation and the generators for energy production over the grid. The cost model then changes to  
\begin{align} \label{eq_f_quad_edp2}
\min_{y,w}  &\sum_{i=1}^{n_1} \gamma_i y_i^2 + \beta_i y_i +  \alpha_i + \sum_{i=1}^{n_2} \zeta_i w_i, \\ ~\mbox{subject~to}~&\sum_{i=1}^{n_1} y_i = b + \sum_{i=1}^{n_2} w_i, ~~ m_i \leq y_i,w_i \leq M_i
\end{align}
with $w_i$ denoting the reserved power (state) of battery $i$. The feasibility constraint changes accordingly: the sum of produced power $\sum_{i=1}^{n_1} y_i$ by $n_1=4$ generators is equal to the sum of reserved power $\sum_{i=1}^{n_2} w_i$ by $n_2=2$ batteries plus the demand $b$. Then, the problem can be formulated as in Remark~\ref{rem_ax}, with $a_i = 1$ for the generators and $a_i = -1$ for the batteries and $b_i = \frac{b}{n_1+n_2}$. The simulation is shown in Fig.~\ref{fig_delay2} for a 2-hop network with $0.2$ link weights and random $M_i$ and $m_i$. The elapsed time of the MATLAB R2021b simulation on an Intel Core i5 @ 2.40GHz processor RAM 8GB is $0.65$ sec. The initial value of $|\overline{d}^0|=5.64$ reduces by order of $10^{-6}$ after $2400$ iterations ($0.156$ sec), implying that, in practical applications, the feasibility can be gained sufficiently fast. 
\begin{figure}
	\centering
 	\includegraphics[width=1.67in]{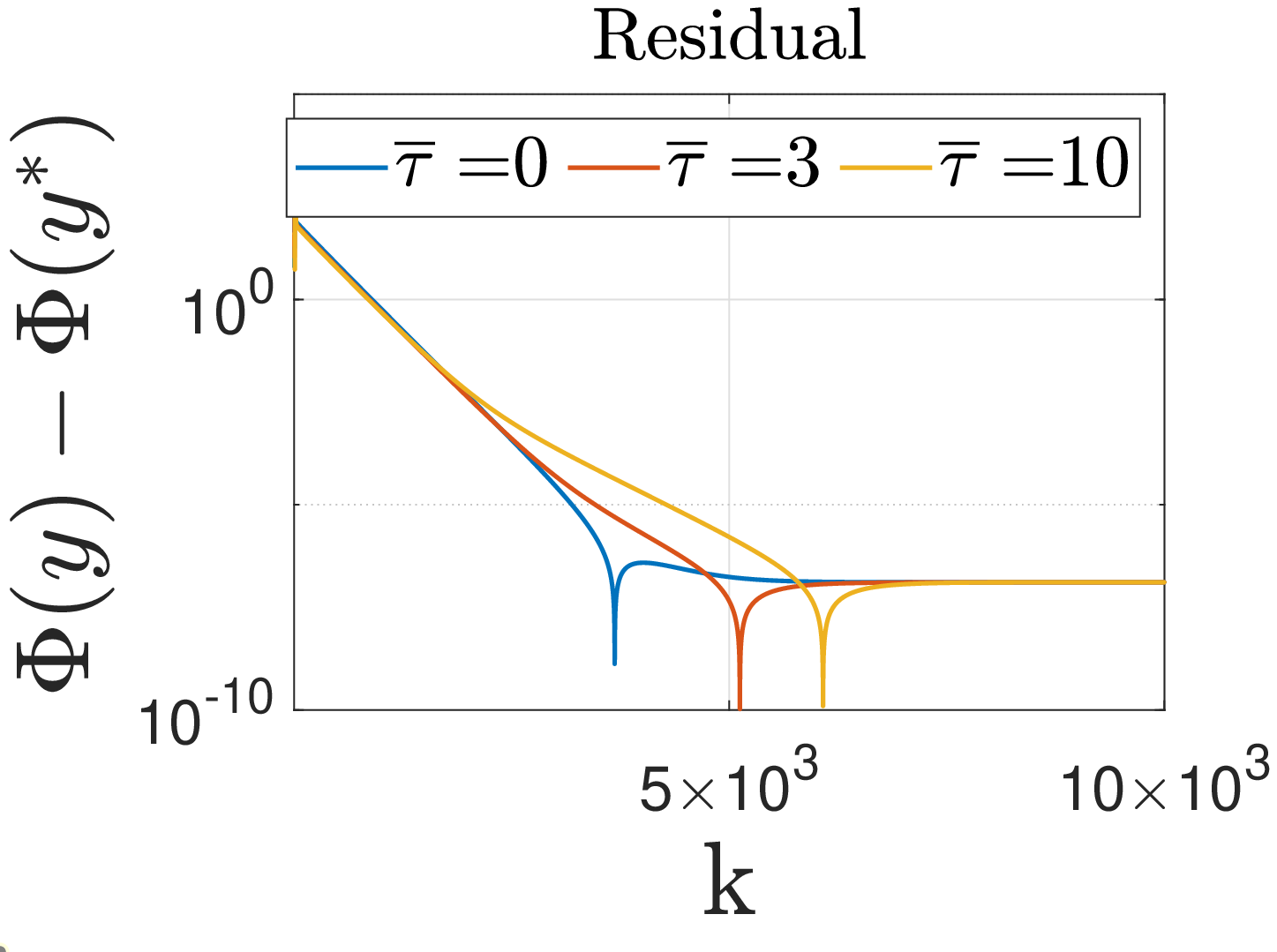}
 	\includegraphics[width=1.67in]{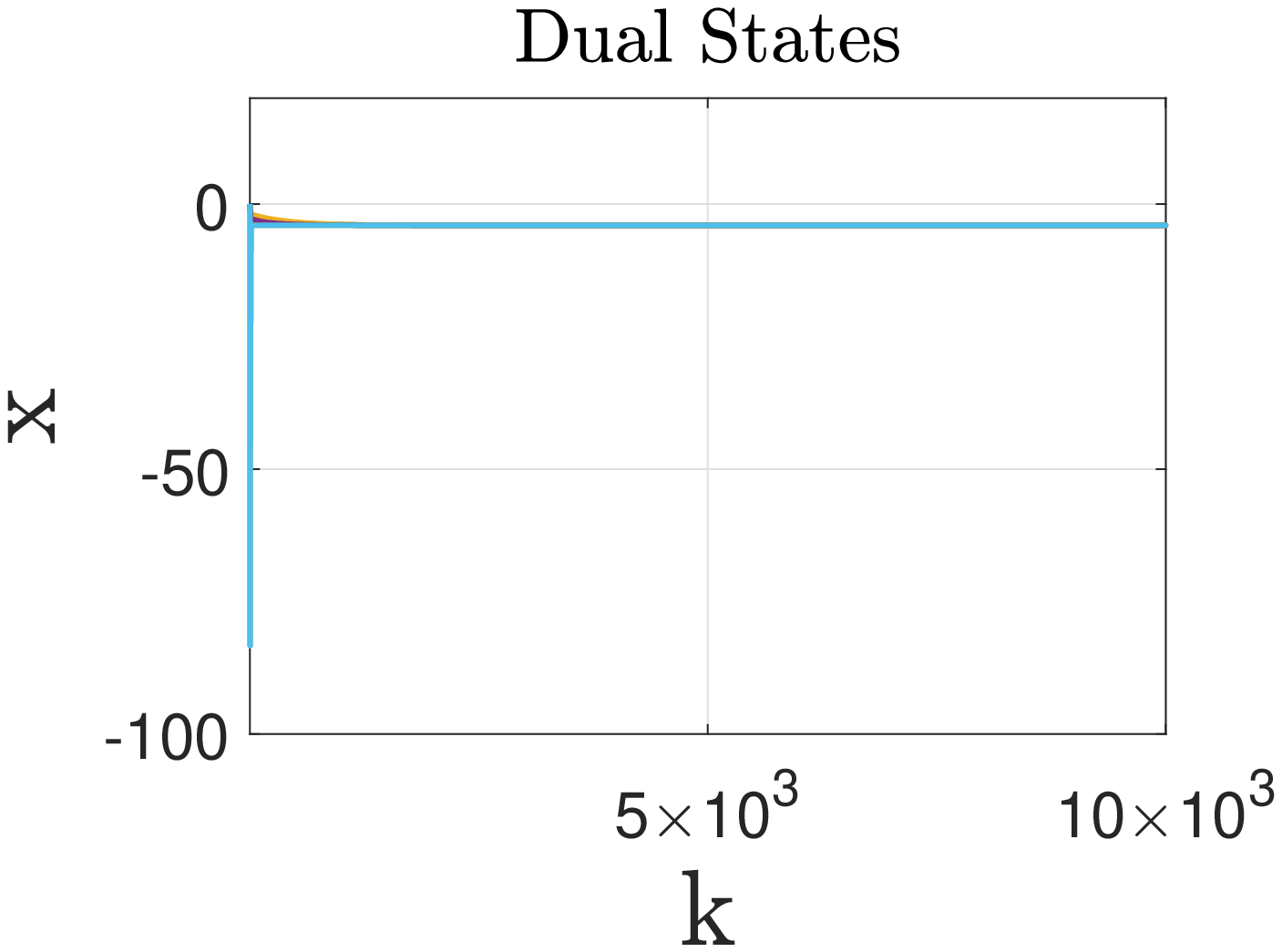}
 	\includegraphics[width=1.67in]{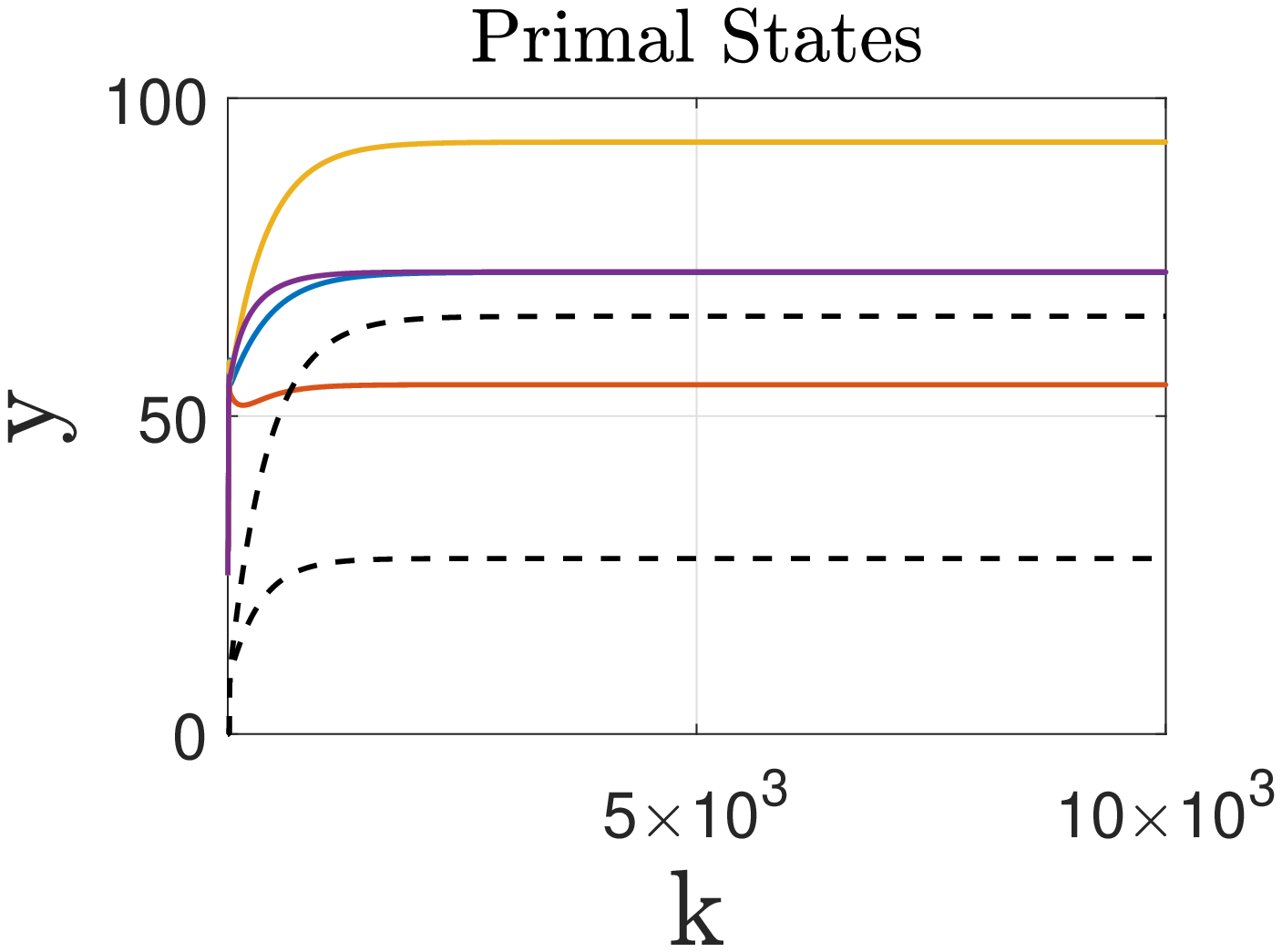}
 	\includegraphics[width=1.67in]{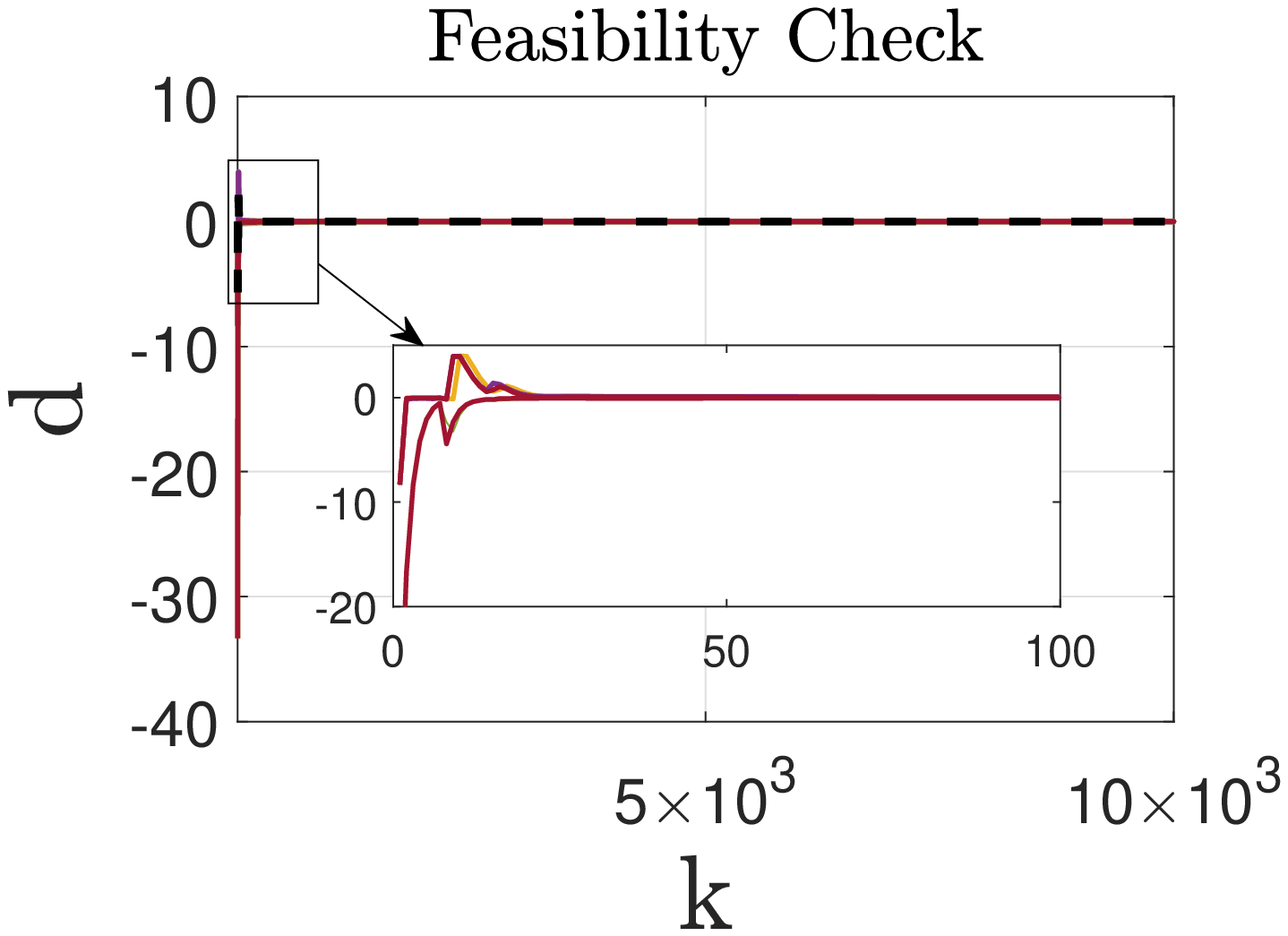}
    \caption{The performance of battery-generator scheduling for similar setup as in Fig.~\ref{fig_delay}. The parameters $\gamma_i$ are randomly chosen in the range $[0.02~0.05]$ and $\beta_i$ in $[-4~1]$ for the generators and $-4$ for the batteries. The dashed and solid lines represent the (primal) $y$ states of the batteries and generators. The feasibility constraint is met for $b=200$. }
    \label{fig_delay2}
\end{figure}

\section{Simulation: Non-Quadratic Cost} \label{sec_sim}
In this section, we consider a logarithmic cost model in a similar setup as in \cite{fast} with parameters chosen similarly,
\begin{eqnarray} \label{eq_F2}
\phi_i(y_i) = \frac{1}{2}a_i(y_i-c_i)^2 + \log(1+\exp(b_i(y_i-d_i))).
\end{eqnarray}
Note that, for this logarithmic function, Eq. \eqref{eq_y_vect} cannot be solved in closed form. We use MATLAB \texttt{fsolve} command in this step. The simulation results are shown in Fig.~\ref{fig_nonquad} compared with \cite{fast}. \textsf{DTAC-ADMM} algorithm shows acceptable performance in the presence of time-invariant delays.
\begin{figure}
	\centering
 	\includegraphics[width=1.67in]{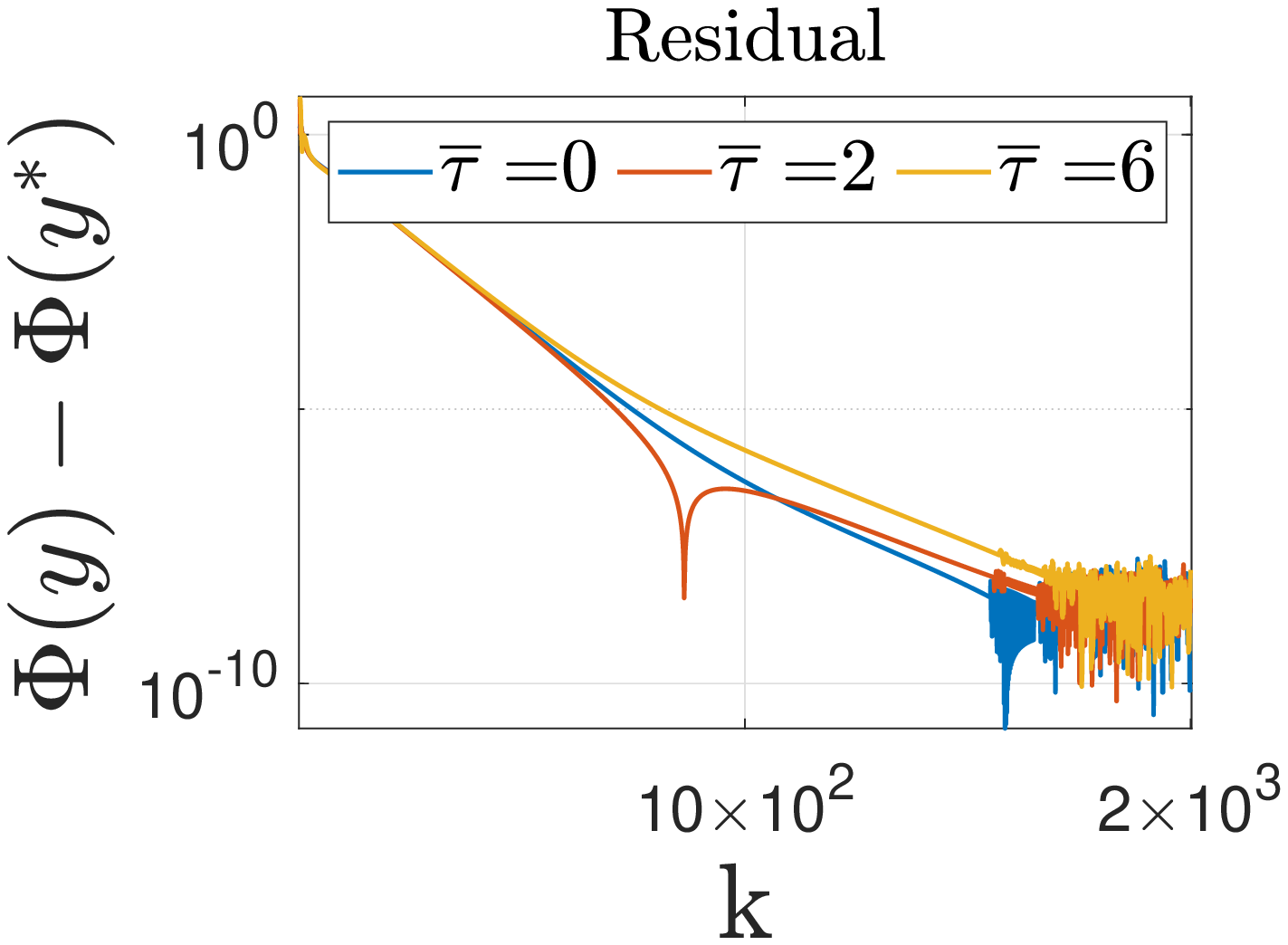}
 	\includegraphics[width=1.67in]{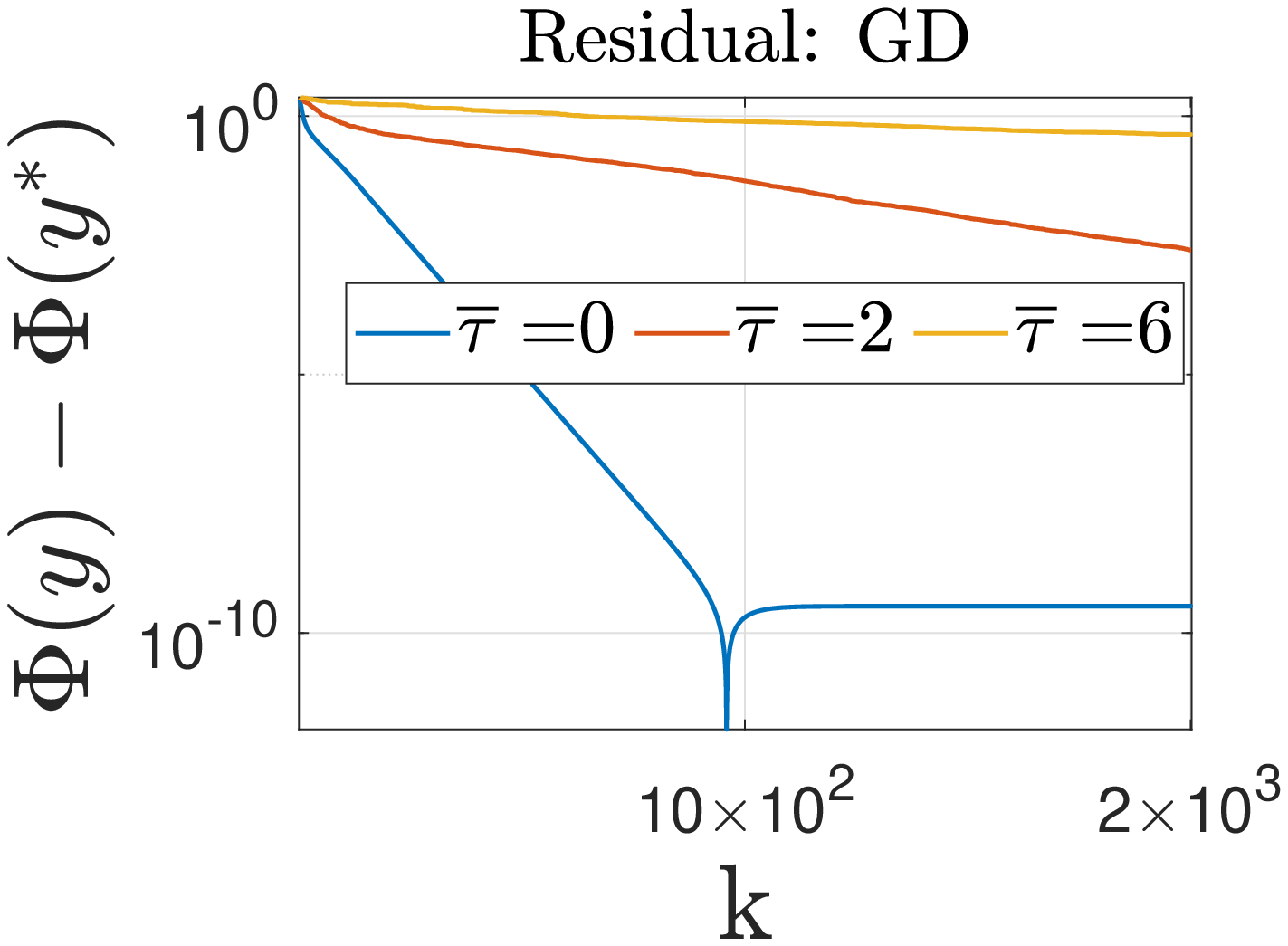}
 	\includegraphics[width=1.67in]{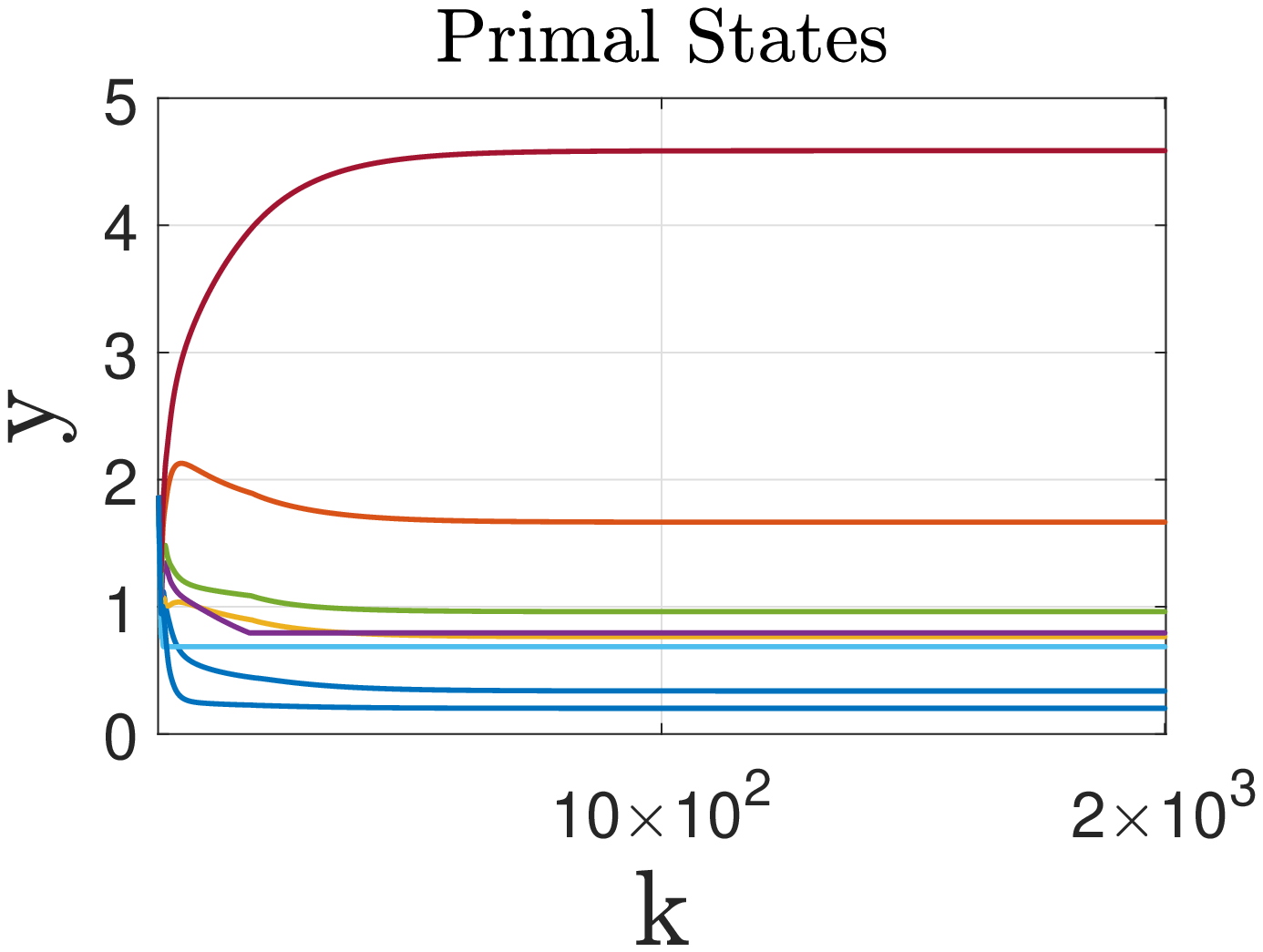}
 	\includegraphics[width=1.67in]{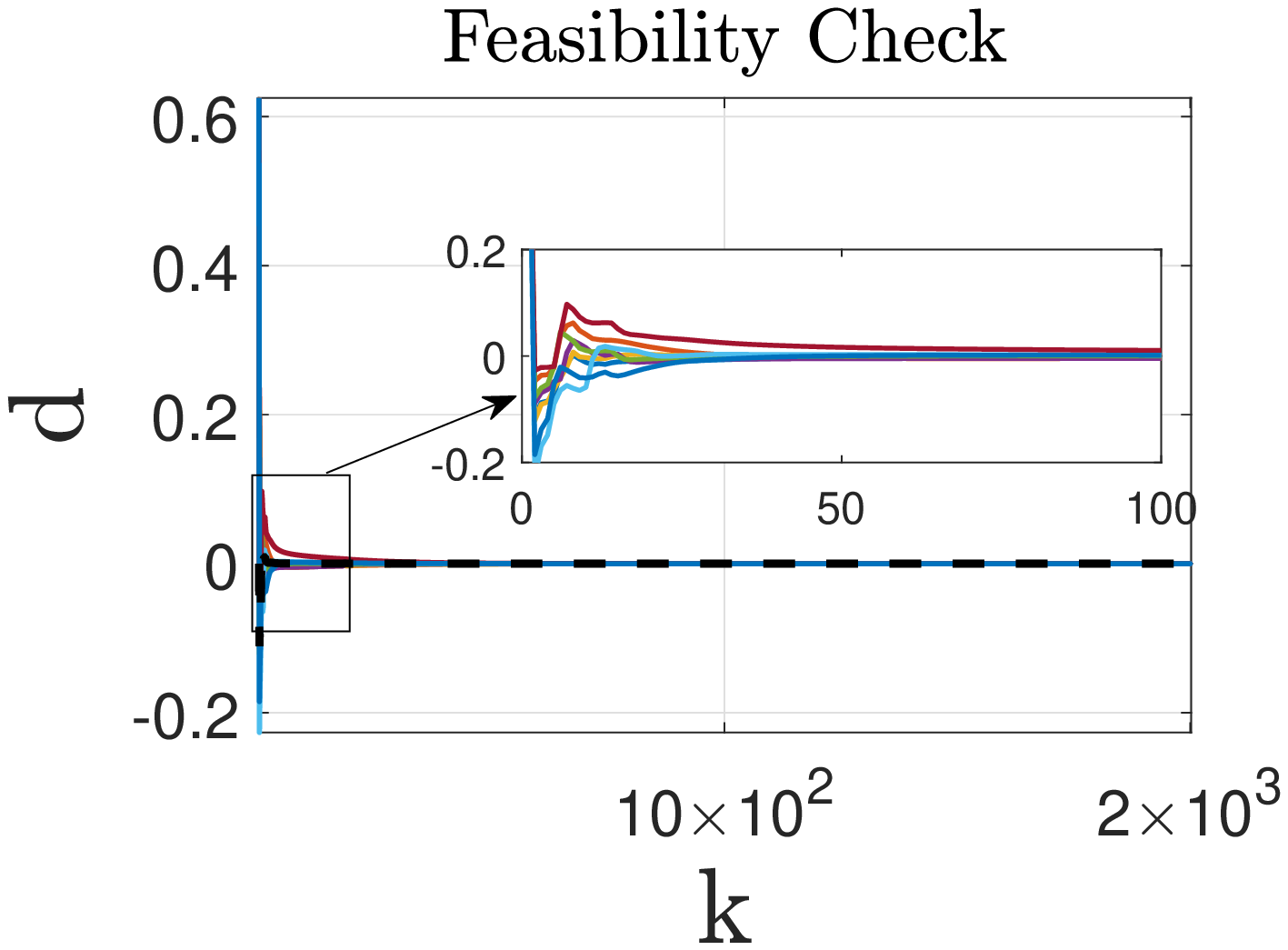}
    \caption{The performance of the \textsf{DTAC-ADMM} is presented for the non-quadratic cost model \eqref{eq_F2} with $y^0_i = \frac{b}{2n}$. The Top-Right figure shows the performance of GD-based solution proposed by \cite{fast} assuming \textit{primal feasible} initialization $y^0_i = \frac{b}{n}$ (as discussed in Section~\ref{sec_intro}). The parameters of the GD solution are: $\eta=0.05$, $v_1 = v_2 = 1$, and no quantization. Additive penalty terms are used to handle box-constraints in \cite{fast}.  }
    \label{fig_nonquad}
\end{figure}
%
%
%
%
\section{Conclusions and Future  Directions}\label{sec_con}

This paper proposes \textsf{DTAC-ADMM} algorithm for distributed resource allocation and advances the existing methods by addressing heterogeneous time delays over the network. 
Relaxing the all-time network connectivity assumption to uniform connectivity over directed networks is a promising future research direction. Further, addressing possible model nonlinearities, e.g., quantization \cite{rikos2021optimal,magnusson2018communication} or clipping \cite{mrd20211st}, in the ADMM-based formulation is another interesting direction.

\section*{Appendix}
In order to prove Theorem~\ref{thm_conv}, we first prove the results for the homogeneously-delayed protocols \eqref{eq_y_sync}-\eqref{eq_x_sync} and then extend it to the heterogeneous case \eqref{eq_y}-\eqref{eq_x}. The matrices $\widetilde{W}$ and $\widetilde{PW}$ follow the definitions in Section~\ref{sec_async}.

\begin{lemma} \label{lem_bertsek}
    The following holds under Assumptions \ref{assup_convex}-\ref{ass_dual} and dynamics \eqref{eq_y_sync}-\eqref{eq_x_sync}.
    \begin{align} \label{eq_bertsek_2}
        \phi_i(y_i^{k+1}) &+ x_i^{k+1} (y_i^{k+1} - y_i^*)  \leq 
        \phi_i(y^*_i). 
    \end{align}
\end{lemma}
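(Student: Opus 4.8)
The plan is to derive \eqref{eq_bertsek_2} directly from the first-order optimality condition of the proximal minimization \eqref{eq_y_sync} that defines $y_i^{k+1}$, combined with the subgradient inequality for the convex $\phi_i$. Since $\Y_i$ is closed and convex and the objective in \eqref{eq_y_sync} is the sum of the closed proper convex $\phi_i$ and a strongly convex smooth quadratic, the minimizer $y_i^{k+1}$ exists and is unique, and it is characterized by a variational inequality: there exists a subgradient $s \in \partial\phi_i(y_i^{k+1})$ such that
$$\Big(s + \sum_{j=1}^n W_{ij} x_j^{k-\overline{\tau}} + c\big(y_i^{k+1} - y_i^k + \sum_{j=1}^n W_{ij} d_j^{k-\overline{\tau}}\big)\Big)(y - y_i^{k+1}) \geq 0 \quad \forall y \in \Y_i.$$

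The key step is the algebraic observation that the bracketed coefficient of $(y-y_i^{k+1})$ equals \emph{exactly} $x_i^{k+1}$. Indeed, substituting $d_i^{k+1}$ from \eqref{eq_d_sync} (note $(y_i^{k+1}-b_i)-(y_i^k-b_i)=y_i^{k+1}-y_i^k$) into \eqref{eq_x_sync} gives $x_i^{k+1} = \sum_j W_{ij}x_j^{k-\overline{\tau}} + c\sum_j W_{ij}d_j^{k-\overline{\tau}} + c(y_i^{k+1}-y_i^k)$, which matches the proximal-gradient term above. Hence the optimality condition collapses to $(s + x_i^{k+1})(y - y_i^{k+1}) \geq 0$ for all $y \in \Y_i$. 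I would then evaluate this at the feasible point $y = y_i^*$ (admissible since $y_i^* \in \Y_i$ by Assumption~\ref{ass_feasible}), obtaining $(s + x_i^{k+1})(y_i^{k+1} - y_i^*) \leq 0$, i.e. $s(y_i^{k+1}-y_i^*) \leq -x_i^{k+1}(y_i^{k+1}-y_i^*)$.

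Finally I would invoke the subgradient inequality for the convex $\phi_i$, namely $\phi_i(y_i^{k+1}) - \phi_i(y_i^*) \leq s(y_i^{k+1} - y_i^*)$, and chain it with the bound from the previous step to conclude $\phi_i(y_i^{k+1}) + x_i^{k+1}(y_i^{k+1}-y_i^*) \leq \phi_i(y_i^*)$, which is \eqref{eq_bertsek_2}. The only delicate point — and the main obstacle — is the careful handling of nondifferentiability: because $\phi_i$ need not be smooth (Assumption~\ref{assup_convex}), one must use the subgradient/variational-inequality characterization of the constrained minimizer rather than a gradient equation, and verify that the \emph{same} subgradient $s$ supplied by the optimality condition is the one used in the convexity inequality. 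Everything else reduces to the exact-cancellation identity $a+c(y_i^{k+1}-\theta)=x_i^{k+1}$, which is the conceptual heart of the argument.
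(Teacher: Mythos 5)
Your proof is correct and takes essentially the same route as the paper's: the paper invokes the minimizer characterization of $y_i^{k+1}$ via \cite[Lemma~4.1]{bertsekas1989parallel}, substitutes $y_i = y_i^*$, and uses the $d$- and $x$-updates to recognize that the proximal term's coefficient is exactly $x_i^{k+1}$ --- which is precisely your exact-cancellation identity. You simply make explicit the variational-inequality and subgradient steps that the paper outsources to the cited reference.
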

\begin{proof}
Following \cite[Lemma~4.1]{bertsekas1989parallel} and Assumptions \ref{assup_convex}-\ref{ass_dual} we have
        \begin{align} \nonumber
        \phi_i(y_i^{k+1}) &+ \Big(\eta_i^{k} y_i +  c (y_i^{k+1} - y_i^k + \delta_i^{k} ) \Big) y_i^{k+1} \leq \\ \label{eq_bertsek}
        & \phi_i(y_i) + \Big(\eta_i^{k} y_i +  c (y_i - y_i^k + \delta_i^{k} ) \Big) y_i,
    \end{align}
for all $y_i \in \Y_i$. Substituting $y_i = y_i^*$ and Eq.~\eqref{eq_d} and \eqref{eq_x} the proof follows.  
\end{proof}
\begin{lemma} \label{lem_local_opt}
    The following holds under Assumptions \ref{assup_convex}-\ref{ass_net} for the protocol \eqref{eq_y_sync}-\eqref{eq_x_sync}
    \begin{align} \nonumber 
        \| \overline{\mb{x}}^{k+1} - \mb{x}^* \|^2 &+ 2c \| \mb{z}^{k+1} - \mb{y}^* \|^\top \mb{e}^{k+1}_x \leq \\ \label{eq_loc_opt}
        & \| \overline{\mb{x}}^{k-\overline{\tau}} - \mb{x}^* \|^2 - \| \overline{\mb{x}}^{k+1} - \overline{\mb{x}}^{k-\overline{\tau}} \|^2,
    \end{align}
\end{lemma}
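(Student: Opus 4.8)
The plan is to prove \eqref{eq_loc_opt} as a one-step energy (dual-ascent) inequality for the synchronous protocol \eqref{eq_y_sync}--\eqref{eq_x_sync}, built directly on top of the pointwise optimality bound of Lemma~\ref{lem_bertsek}. The right-hand side already has the contraction shape $\|\overline{\mb{x}}^{k-\overline{\tau}}-\mb{x}^*\|^2 - \|\overline{\mb{x}}^{k+1}-\overline{\mb{x}}^{k-\overline{\tau}}\|^2$, so the whole argument reduces to (i) producing a variational inequality at iterate $k+1$ and (ii) splitting the resulting bilinear form into an \emph{average} piece that telescopes into the three squared norms and a \emph{consensus-error} piece that reproduces exactly the stated cross term.

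First I would derive the variational inequality $(x_i^{k+1}-x^*)(y_i^{k+1}-y_i^*)\leq 0$ for each $i$. Lemma~\ref{lem_bertsek} gives $\phi_i(y_i^{k+1})-\phi_i(y_i^*)\leq -x_i^{k+1}(y_i^{k+1}-y_i^*)$. Under Assumption~\ref{ass_dual} (strong duality), $(x^*,\mb{y}^*)$ is a saddle point of the Lagrangian \eqref{lagrang_initial}, so each $y_i^*$ minimizes $\phi_i(\cdot)+x^*(\cdot-b_i)$ over $\Y_i$; evaluating this minimality at $y_i^{k+1}\in\Y_i$ yields $\phi_i(y_i^*)-\phi_i(y_i^{k+1})\leq x^*(y_i^{k+1}-y_i^*)$. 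Adding the two inequalities cancels the cost terms and produces the claimed $(x_i^{k+1}-x^*)(y_i^{k+1}-y_i^*)\leq 0$; summing over $i$ gives $(\mb{x}^{k+1}-\mb{x}^*)^\top(\mb{y}^{k+1}-\mb{y}^*)\leq 0$.

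Next I would split $\mb{x}^{k+1}-\mb{x}^*=(\overline{\mb{x}}^{k+1}-\mb{x}^*)+\mb{e}_x^{k+1}$ and treat the two parts separately. For the average part, using $\overline{\mb{x}}^{k+1}-\mb{x}^*=(\overline{x}^{k+1}-x^*)\mb{1}_n$, the feasibility identity $\mb{1}_n^\top(\mb{y}^{k+1}-\mb{y}^*)=n\overline{d}^{k+1}$ (which follows from $\overline{d}^{k+1}=\tfrac1n\mb{1}_n^\top(\mb{y}^{k+1}-\mb{b})$ and $\mb{1}_n^\top(\mb{y}^*-\mb{b})=0$), and the averaged dynamics $\overline{x}^{k+1}=\overline{x}^{k-\overline{\tau}}+c\,\overline{d}^{k+1}$ (obtained by averaging \eqref{eq_x_sync} over $i$ and invoking bi-stochasticity of $W$), I get
\begin{align}
 2c\,(\overline{\mb{x}}^{k+1}-\mb{x}^*)^\top(\mb{y}^{k+1}-\mb{y}^*)
   &= 2n\,(\overline{x}^{k+1}-x^*)(\overline{x}^{k+1}-\overline{x}^{k-\overline{\tau}}) \nonumber \\
   &= \|\overline{\mb{x}}^{k+1}-\mb{x}^*\|^2 - \|\overline{\mb{x}}^{k-\overline{\tau}}-\mb{x}^*\|^2 + \|\overline{\mb{x}}^{k+1}-\overline{\mb{x}}^{k-\overline{\tau}}\|^2, \nonumber
\end{align}
where the second equality is the elementary identity $2u(u-v)=u^2-v^2+(u-v)^2$ applied coordinate-wise. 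For the consensus-error part, since $\mb{1}_n^\top\mb{e}_x^{k+1}=0$ (because $\overline{x}^{k+1}$ is the mean of $\mb{x}^{k+1}$), the shift in $\mb{z}^{k+1}=\mb{y}^{k+1}-\overline{d}^{k+1}\mb{1}_n$ is annihilated, so $(\mb{e}_x^{k+1})^\top(\mb{y}^{k+1}-\mb{y}^*)=(\mb{z}^{k+1}-\mb{y}^*)^\top\mb{e}_x^{k+1}$. Multiplying the summed variational inequality by $2c$, substituting both computations, and rearranging yields \eqref{eq_loc_opt}.

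The step I expect to be the main obstacle is the first one: correctly invoking strong duality to characterize $(x^*,\mb{y}^*)$ as a saddle point and matching it against Lemma~\ref{lem_bertsek}, whose inequality is phrased in terms of the \emph{local} multiplier $x_i^{k+1}$ rather than the common optimal $x^*$. The algebra in the second half is routine once the averaged recursion $\overline{x}^{k+1}=\overline{x}^{k-\overline{\tau}}+c\,\overline{d}^{k+1}$ and the orthogonality $\mb{1}_n^\top\mb{e}_x^{k+1}=0$ are in hand; there the only care needed is tracking the delay index $k-\overline{\tau}$ consistently so that it lands on the right-hand side of \eqref{eq_loc_opt}.
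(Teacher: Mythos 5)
Your proposal is correct and follows essentially the same route as the paper's proof: combine the ADMM optimality condition of Lemma~\ref{lem_bertsek} with the saddle-point optimality of $(\mb{y}^*,x^*)$ to get $(\mb{x}^{k+1}-\mb{x}^*)^\top(\mb{y}^{k+1}-\mb{y}^*)\leq 0$, split $\mb{x}^{k+1}-\mb{x}^*$ into the mean part (handled via $\overline{x}^{k+1}=\overline{x}^{k-\overline{\tau}}+c\,\overline{d}^{k+1}$ and the three-squares identity) and the zero-mean consensus error (which turns $\mb{y}^{k+1}$ into $\mb{z}^{k+1}$). You are in fact more explicit than the paper's sketch, which states the same decomposition but defers the details to the no-delay case in the cited reference.
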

with $\mb{z}^{k+1} = \mb{y}^{k+1}- \overline{\mb{d}}^{k+1} $.

\begin{proof}
We provide a general sketch of the proof, see details for the no-delay case in \cite[Proposition~1]{falsone2020tracking}. Using the saddle point inequality for primal variable $\mb{y}$ and dual variable $x$ in \eqref{problem_initial}, we have $ L(\mb{y}^*,x) \leq  L(\mb{y}^*,x^*) \leq  L(\mb{y},x^*)$. Recalling that $L(\mb{y}^*,x^*) = \Phi(\mb{y}^*)$. Using \eqref{eq_bertsek_2} in Lemma~\ref{lem_bertsek} over the time-scale $k+1-\overline{\tau}$ to $k+1$,
    \begin{align} \nonumber
         &\Phi(\mb{y}^{k+1}) +   (\mb{y}^{k+1} - \mb{y}^*)^\top \mb{x}^{k+1} \leq \Phi(\mb{y}^*)  \\
         & \leq \Phi(\mb{y}) +  (\mb{y}^{k+1} - \mb{b})^\top \mb{x}^* = \Phi(\mb{y}) +  (\mb{y}^{k+1} - \mb{y}^*)^\top \mb{x}^*,
         \label{eq_loc_opt_proof}
    \end{align} 
where the last equality follows the feasibility of $\mb{y}^*$. Putting $\mb{y} = \mb{y}^{k+1}$ and simplifying by algebraic manipulations, $(\mb{y}^{k+1} - \mb{y}^*)^\top (\overline{\mb{x}}^{k+1} - \mb{x}^*) + (\mb{y}^{k+1} - \mb{y}^*)^\top (\mb{x}^{k+1} - \overline{\mb{x}}^{k+1}) \leq 0$. From the definitions and using $\mb{1}_n^\top \mb{y}^* = \mb{b}$, it follows that $(\mb{y}^{k+1} - \mb{y}^*)^\top (\overline{\mb{x}}^{k+1} - \mb{x}^*) = \frac{1}{c}(\overline{\mb{x}}^{k+1} - \overline{\mb{x}}^{k-\overline{\tau}})^\top (\overline{\mb{x}}^{k+1} - \mb{x}^*)$, $(\mb{y}^{k+1} - \mb{y}^*)^\top (\mb{x}^{k+1} - \overline{\mb{x}}^{k+1}) = (\mb{y}^{k+1} - \mb{y}^*)^\top (\mb{z}^{k+1} - \mb{y}^*)$, and the proof follows from the algebraic equality 
\begin{align} \nonumber
    2 (\overline{\mb{x}}^{k+1} &-\mb{x}^*)^\top(\overline{\mb{x}}^{k+1}-\overline{\mb{x}}^{k-\overline{\tau}}) = \|\overline{\mb{x}}^{k+1}-\mb{x}^*\|^2 \\ 
    &+ \|\overline{\mb{x}}^{k+1}-\overline{\mb{x}}^{k-\overline{\tau}}\|^2 - \|\mb{x}^*-\overline{\mb{x}}^{k-\overline{\tau}}\|^2,
\end{align}\normalsize  

\end{proof}

\begin{corollary} \label{cor_xyz}
    Let Assumptions \ref{assup_convex}-\ref{ass_delay} hold and assume $\underline{\mb{d}}^k = \mb{1}_{\overline{\tau}+1} \otimes \mb{d}^k$, $\underline{\mb{x}}^k = \mb{1}_{n(\overline{\tau}+1)} \otimes \overline{x}^k$, and $\underline{\mb{y}}^k = \mb{1}_{\overline{\tau}+1} \otimes \mb{y}^k$ (e.g., as $k\rightarrow \infty$). Then, Eq.~\eqref{eq_loc_opt} can be reformulated as
       \begin{align} \nonumber 
        \| \overline{\mb{x}}_{\overline{\tau}}^{k+1} - \underline{\mb{x}}^* \|^2 &+ 2c \| \underline{\mb{z}}^{k+1} - \underline{\mb{y}}^* \|^\top \underline{\mb{e}}^{k+1}_x \leq \\ \label{eq_loc_opt2}
        & \| \overline{\mb{x}}_{\overline{\tau}}^{k} - \underline{\mb{x}}^* \|^2 - \| \overline{\mb{x}}_{\overline{\tau}}^{k+1} - \overline{\mb{x}}_{\overline{\tau}}^{k} \|^2,
    \end{align}
    with $\underline{\mb{z}}^{k+1} = \underline{\mb{y}}^{k+1}- \overline{\mb{d}}_{\overline{\tau}}^{k+1}$.
\end{corollary}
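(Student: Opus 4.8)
The plan is to \emph{reduce} the augmented (heterogeneous) inequality \eqref{eq_loc_opt2} to the homogeneous inequality \eqref{eq_loc_opt} already established in Lemma~\ref{lem_local_opt}, exploiting the flatness hypothesis of the corollary. First I would observe that the three standing assumptions $\underline{\mb{d}}^k = \mb{1}_{\overline{\tau}+1} \otimes \mb{d}^k$, $\underline{\mb{x}}^k = \mb{1}_{n(\overline{\tau}+1)} \otimes \overline{x}^k$ and $\underline{\mb{y}}^k = \mb{1}_{\overline{\tau}+1} \otimes \mb{y}^k$ make every augmented iterate a Kronecker lift of its size-$n$ counterpart. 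The key structural fact enabling the reduction is Remark~\ref{rem_augP}(1), namely $W = \sum_{r=0}^{\overline{\tau}} (P_r \circ W)$: when the augmented vector is flat, the first block-row of $\overline{PW}$ applied to it collapses the delayed copies into a single homogeneous average, so that $\overline{PW}$ acting on a flat state reproduces the $W$-averaging of the homogeneous protocol \eqref{eq_y_sync}--\eqref{eq_x_sync} in each block. Consequently the flat augmented dynamics coincide with the homogeneous dynamics, and Lemma~\ref{lem_local_opt} applies verbatim to the base quantities $\overline{\mb{x}}^{k}$, $\mb{z}^{k}$, $\mb{e}_x^{k}$, $\mb{x}^*$, $\mb{y}^*$.

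Next I would lift \eqref{eq_loc_opt} term by term. Writing $\overline{\mb{x}}_{\overline{\tau}}^{k} = \mb{1}_{\overline{\tau}+1} \otimes \overline{\mb{x}}^{k}$ and $\underline{\mb{x}}^* = \mb{1}_{\overline{\tau}+1} \otimes \mb{x}^*$ (and likewise for $\underline{\mb{z}}^{k+1}$, $\underline{\mb{y}}^*$, $\underline{\mb{e}}_x^{k+1}$), the elementary identities $\| \mb{1}_{\overline{\tau}+1} \otimes \mb{v} \|^2 = (\overline{\tau}+1)\,\|\mb{v}\|^2$ and $(\mb{1}_{\overline{\tau}+1}\otimes \mb{a})^\top (\mb{1}_{\overline{\tau}+1}\otimes \mb{b}) = (\overline{\tau}+1)\,\mb{a}^\top \mb{b}$ show that each squared-norm and inner-product term in \eqref{eq_loc_opt2} equals exactly $(\overline{\tau}+1)$ times the corresponding term in \eqref{eq_loc_opt}. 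Since this common factor multiplies both sides, it cancels and the inequality direction is preserved; thus \eqref{eq_loc_opt} lifts to \eqref{eq_loc_opt2} provided the time indices match up.

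The remaining---and most delicate---step is the re-indexing of the right-hand side from $k-\overline{\tau}$ in \eqref{eq_loc_opt} to $k$ in \eqref{eq_loc_opt2}. Here I would argue that a single step of the augmented system already encodes the length-$(\overline{\tau}+1)$ window $\underline{\mb{x}}^{k}=[\mb{x}^{k};\dots;\mb{x}^{k-\overline{\tau}}]$, so that the homogeneous protocol advances by $\overline{\tau}+1$ real iterations per effective update (cf.\ the discussion following \eqref{eq_x_sync_aug}). Under the flatness hypothesis all entries of this window coincide, and the previous averaged augmented state $\overline{\mb{x}}_{\overline{\tau}}^{k}$ therefore plays exactly the role of $\overline{\mb{x}}^{k-\overline{\tau}}$ in Lemma~\ref{lem_local_opt}. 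Substituting this identification into the lifted inequality yields \eqref{eq_loc_opt2}. I expect this index-matching to be the main obstacle: it requires showing rigorously that restricting the heterogeneous augmented recursion to flat states is \emph{precisely} the homogeneous super-step from iteration $k-\overline{\tau}$ to $k+1$, which in turn hinges on the construction of $\underline{\mb{x}}^{k}$ as a stacked delay-history together with the collapsing identity $W = \sum_{r} P_r \circ W$ from Remark~\ref{rem_augP}.
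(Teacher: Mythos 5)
Your reduction is correct and matches what the paper intends: the paper states Corollary~\ref{cor_xyz} without an explicit proof, justifying it only by the remark that under the flatness hypothesis (reached as $k\rightarrow\infty$) the augmented quantities are Kronecker lifts of the homogeneous ones, so Lemma~\ref{lem_local_opt} carries over after the common factor $\overline{\tau}+1$ cancels. Your ``super-step'' re-indexing argument is more elaborate than needed --- flatness gives $\mb{x}^{k-\overline{\tau}}=\mb{x}^{k}$ directly, so $\overline{\mb{x}}^{k-\overline{\tau}}=\overline{\mb{x}}^{k}$ immediately --- but it reaches the same conclusion by essentially the same route.
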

As discussed later in the proof of Theorem~\ref{thm_conv}, the above corollary relates to the augmented formulation via dynamics \eqref{eq_y_vect}-\eqref{eq_x_aug} in limit as $k\rightarrow \infty$.

\begin{lemma}[Homogeneous Delay] \label{lem_synch}
    Let Assumption \ref{ass_net} hold and assume arbitrary homogeneous delay $\overline{\tau}$. The modified augmented matrix $\widetilde{PW}$ in \eqref{eq_pw_tilde} satisfies $\rho(\widetilde{PW}) = \rho(\widetilde{W})^{\frac{1}{\overline{\tau}+1}} <1$. 
\end{lemma}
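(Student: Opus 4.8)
The plan is to exploit the special structure that a homogeneous delay imposes on the augmented matrix, thereby reducing the spectral question to that of a block companion matrix. First I would specialize the delay matrices of \eqref{eq_aug_WA}: when every link carries the common delay $\overline{\tau}$, the indicators collapse so that $P_{\overline{\tau}} \circ W = W$ while $P_r \circ W = \mb{0}_{n\times n}$ for all $r < \overline{\tau}$. Consequently the top block-row of $\overline{PW}$ has only its last block nonzero, and forming $\widetilde{PW} = \overline{PW} - \overline{P1}$ perturbs only that block, replacing $W$ by $\widetilde{W} = W - \frac{1}{n}\mb{1}_{n\times n}$ while leaving the identity subdiagonal intact. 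Thus $\widetilde{PW}$ becomes the $n(\overline{\tau}+1)\times n(\overline{\tau}+1)$ block companion matrix with $\widetilde{W}$ in the top-right corner and $\mb{I}_n$ on the block subdiagonal.

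The heart of the argument is then a spectral correspondence. I would write the eigenvalue equation $\widetilde{PW}\,\mb{v} = \lambda\mb{v}$ for $\mb{v} = [\mb{v}_1;\dots;\mb{v}_{\overline{\tau}+1}]$. The $\overline{\tau}$ subdiagonal identity blocks give the recursions $\mb{v}_j = \lambda\mb{v}_{j+1}$, hence $\mb{v}_j = \lambda^{\overline{\tau}+1-j}\mb{v}_{\overline{\tau}+1}$; substituting into the top block-row $\widetilde{W}\mb{v}_{\overline{\tau}+1} = \lambda\mb{v}_1$ yields $\widetilde{W}\mb{v}_{\overline{\tau}+1} = \lambda^{\overline{\tau}+1}\mb{v}_{\overline{\tau}+1}$. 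For $\lambda\neq 0$ the block $\mb{v}_{\overline{\tau}+1}$ cannot vanish (otherwise all blocks would vanish), so $\lambda^{\overline{\tau}+1}$ is an eigenvalue of $\widetilde{W}$; conversely, for any eigenpair $(\mu,\mb{w})$ of $\widetilde{W}$ and any $(\overline{\tau}+1)$-th root $\lambda$ of $\mu$, setting $\mb{v}_j = \lambda^{\overline{\tau}+1-j}\mb{w}$ produces an eigenvector of $\widetilde{PW}$ with eigenvalue $\lambda$. Therefore the nonzero spectrum of $\widetilde{PW}$ is exactly the set of $(\overline{\tau}+1)$-th roots of the spectrum of $\widetilde{W}$, and taking maximum moduli gives $\rho(\widetilde{PW}) = \rho(\widetilde{W})^{1/(\overline{\tau}+1)}$.

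Finally I would establish the strict inequality using the spectral properties of $W$ guaranteed by Assumption~\ref{ass_net}. Since $W$ is symmetric, bi-stochastic, and PSD over a connected graph, its eigenvalues lie in $[0,1]$ with the value $1$ simple and carried by $\mb{1}_n$. Because $\frac{1}{n}\mb{1}_{n\times n}$ is the orthogonal projector onto $\mathrm{span}(\mb{1}_n)$ and commutes with $W$, the matrix $\widetilde{W}$ shares the eigenbasis of $W$ and merely deletes the unit eigenvalue, so $\rho(\widetilde{W}) = \lambda_2(W) < 1$. Combining with the previous step, $\rho(\widetilde{PW}) = \rho(\widetilde{W})^{1/(\overline{\tau}+1)} < 1$, which closes the claim.

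I expect the main obstacle to be the structural reduction in the first step, namely verifying precisely that subtracting $\overline{P1}$ modifies only the top block-row, so that the companion structure (with identity subdiagonal preserved and $W\mapsto\widetilde{W}$ only in the top-right block) survives; the lemma's exponent $\tfrac{1}{\overline{\tau}+1}$ is false for any other placement. The spectral correspondence of the second step is then the standard companion-matrix computation, which one could equivalently package as the determinant factorization $\det(\lambda\mb{I} - \widetilde{PW}) = \det(\lambda^{\overline{\tau}+1}\mb{I}_n - \widetilde{W})$; the only care needed there is bookkeeping of the zero eigenvalues present when $\overline{\tau}\geq 1$, which do not affect $\rho(\widetilde{PW})$.
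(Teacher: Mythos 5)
Your proof is correct and follows essentially the same route as the paper's: both rest on the spectral correspondence $\sigma_{\widetilde{PW}}(\lambda)=\sigma_{\widetilde{W}}(\lambda^{\overline{\tau}+1})$ for the block-companion form of $\widetilde{PW}$ (the paper's Eq.~\eqref{eq_pw_sync}), the only difference being that you derive this identity directly from the eigenvector recursion whereas the paper imports it from \cite[Lemma~4]{delay_est}. Your closing paragraph establishing $\rho(\widetilde{W})<1$ supplies a step the paper leaves implicit under Assumption~\ref{ass_net}.
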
 
\begin{proof}
    Let $\sigma_{\widetilde{W}}(\lambda)$ represent the characteristic polynomial of the matrix  $\widetilde{W}$. From \cite[Lemma~4]{delay_est},  
    \begin{align}
        \sigma_{\widetilde{PW}}(\lambda) = \sigma_{\widetilde{W}}(\lambda^{\overline{\tau}+1}).
    \end{align}
    Let $\lambda_i$, $i=1,\dots,n$ denote the roots of $\sigma_{\widetilde{W}}(\lambda) =  
    0$; then, $\lambda_i^{\overline{\tau}}$ denote the roots of $\sigma_{\widetilde{PW}}(\lambda) =  
    0$. This implies that $\rho(\widetilde{PW}) = \rho(\widetilde{W})^{\frac{1}{\overline{\tau}+1}} < 1$. 
\end{proof}

\begin{lemma}  \label{lem_conv_sync}
    Under Assumptions~\ref{assup_convex}, \ref{ass_feasible}, and \ref{ass_net} with homogeneous delays $\overline{\tau}$ under dynamics \eqref{eq_y_sync}-\eqref{eq_x_sync} the following error-dynamics is convergent.
    \begin{align} \label{eq_F}
        \mb{e}^{k+1} = F \mb{e}^{k-\overline{\tau}} +  \mb{1}_2 \otimes c(\mb{z}^{k+1} - \mb{z}^{k-\overline{\tau}}),
    \end{align}
    with
   \begin{align}
      F = \left( \begin{array}{cc}
            \widetilde{W} &  \widetilde{W}\\
            \mb{0}_{n\times n} & \widetilde{W}
      \end{array} \right),~ \mb{e}^{k+1}= \left( \begin{array}{cc}
            \mb{e}^{k+1}_x \\ \mb{e}^{k+1}_d
      \end{array} \right),
    \end{align} 
    along with $\mb{z}^{k+1}$, $\mb{x}^{k+1}$, and $\mb{e}_x^{k+1}$ following Eq.~\eqref{eq_loc_opt} in Lemma~\ref{lem_local_opt}. Further, the sequence   $\{\|\mb{x}^{k+1} - \mb{x}^* \|^2 + c^2 \|\mb{z}^{k+1} - \mb{y}^* \|^2 \}$ is convergent under dynamics \eqref{eq_y_sync}-\eqref{eq_x_sync}.
\end{lemma}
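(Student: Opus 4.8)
The plan is to split the proof along the two claims in the stated order: first the stability of the purely delayed linear error recursion \eqref{eq_F}, and then the convergence of the Lyapunov-type sequence $V^{k}:=\|\mb{x}^{k} - \mb{x}^*\|^2 + c^2\|\mb{z}^{k} - \mb{y}^*\|^2$. Since these two pieces are mutually dependent, I would ultimately close them together by a bootstrapping argument, but it is cleanest to set up each ingredient separately first.

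For the error dynamics I would begin by observing that $F$ is block upper-triangular with $\widetilde{W}$ on its diagonal, hence $\rho(F)=\rho(\widetilde{W})$. The recursion \eqref{eq_F} is a purely delayed linear system (it maps $\mb{e}^{k-\overline{\tau}}$ to $\mb{e}^{k+1}$ with no intermediate state term), so lifting it over a window of length $\overline{\tau}+1$ yields a block-companion operator whose nonzero eigenvalues are the $(\overline{\tau}+1)$-th roots of those of $F$; its spectral radius therefore coincides with $\rho(\widetilde{PW})=\rho(\widetilde{W})^{1/(\overline{\tau}+1)}<1$, as supplied by Lemma~\ref{lem_synch} under Assumption~\ref{ass_net}. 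With a Schur-stable lifted matrix in hand I would invoke the input-to-state-stability result \cite[Lemma~3]{falsone2020tracking}: the forcing term $\mb{1}_2\otimes c(\mb{z}^{k+1}-\mb{z}^{k-\overline{\tau}})$ is bounded because $\mb{y}^k$, and hence $\mb{z}^k$, is confined to the box-constraint set $\Y$, so $\mb{e}_x^k$ and $\mb{e}_d^k$ remain bounded; and once the forcing term is shown to vanish, the same result yields $\mb{e}^k\to\mb{0}$.

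For the Lyapunov sequence I would start from inequality \eqref{eq_loc_opt} of Lemma~\ref{lem_local_opt}, which is phrased in the average variable $\overline{\mb{x}}$, and bridge it to the full variable $\mb{x}$ appearing in $V^k$. Using the orthogonal splitting $\mb{x}^{k+1}=\overline{\mb{x}}^{k+1}+\mb{e}_x^{k+1}$, where $\mb{x}^*$ and $\overline{\mb{x}}^{k+1}$ lie in the consensus direction while $\mb{e}_x^{k+1}$ has zero mean (so the cross term $(\overline{\mb{x}}^{k+1}-\mb{x}^*)^\top\mb{e}_x^{k+1}=0$), I would write $\|\mb{x}^{k+1}-\mb{x}^*\|^2=\|\overline{\mb{x}}^{k+1}-\mb{x}^*\|^2+\|\mb{e}_x^{k+1}\|^2$ and do the analogous bookkeeping for $c^2\|\mb{z}^{k+1}-\mb{y}^*\|^2$. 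Substituting these into \eqref{eq_loc_opt} turns the average-variable estimate into a descent inequality of the schematic form $V^{k+1}\le V^{k-\overline{\tau}}-\|\overline{\mb{x}}^{k+1}-\overline{\mb{x}}^{k-\overline{\tau}}\|^2+R^{k}$, where the remainder $R^{k}$ collects the consensus-error contributions built from $\mb{e}_x^{k},\mb{e}_d^{k}$ together with the cross term $2c(\mb{z}^{k+1}-\mb{y}^*)^\top\mb{e}_x^{k+1}$ from \eqref{eq_loc_opt}.

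Finally I would close the loop. Because the inequality couples index $k+1$ to $k-\overline{\tau}$, I would run the monotonicity argument along each of the $\overline{\tau}+1$ interleaved subsequences: summing the nonpositive descent terms forces $\overline{\mb{x}}^{k+1}-\overline{\mb{x}}^{k-\overline{\tau}}\to\mb{0}$, equivalently the mean feasibility-deviation $\overline{d}^k\to0$ (since $\overline{\mb{x}}^{k+1}-\overline{\mb{x}}^{k-\overline{\tau}}$ is a multiple of $c\,\overline{d}$, using $\overline{x}^{k+1}=\overline{x}^k+c\overline{d}^k$). Combined with the boundedness from the ISS step, this drives the forcing term $\mb{z}^{k+1}-\mb{z}^{k-\overline{\tau}}\to\mb{0}$, which in turn yields $\mb{e}_x^k,\mb{e}_d^k\to\mb{0}$ and hence $R^k\to0$; a sequence bounded below whose increments are nonpositive up to a vanishing perturbation is convergent, giving the claimed convergence of $V^k$. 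I expect the main obstacle to be precisely this coupling: the ISS step needs the forcing term to vanish, while the vanishing of the forcing term is extracted from the Lyapunov descent. The delicate part is therefore ordering the bootstrap correctly (boundedness $\Rightarrow$ summable descent $\Rightarrow$ forcing term $\to\mb{0}$ $\Rightarrow$ errors $\to\mb{0}$ $\Rightarrow$ $V^k$ convergent) while carrying the delay-skip structure faithfully through the $\overline{\tau}+1$ interleaved subsequences rather than through a single telescoping sum.
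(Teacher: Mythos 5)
Your architecture largely matches the paper's: the stability of \eqref{eq_F} via the lifted companion form (your ``block-companion operator'' is exactly the matrix of Corollary~\ref{cor_pw}, whose spectral radius $\rho(\widetilde{W})^{1/(\overline{\tau}+1)}<1$ is Lemma~\ref{lem_synch}), input-to-state stability from \cite[Lemma~3]{falsone2020tracking} for boundedness, and Lemma~\ref{lem_local_opt} as the source of descent. Your orthogonal splitting $\|\mb{x}^{k+1}-\mb{x}^*\|^2=\|\overline{\mb{x}}^{k+1}-\mb{x}^*\|^2+\|\mb{e}_x^{k+1}\|^2$ is correct (since $\mb{e}_x^{k+1}$ is zero-mean and $\overline{\mb{x}}^{k+1}-\mb{x}^*$ lies in the consensus direction) and is a useful bridge that the paper leaves implicit.

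The gap is in the step ``summing the nonpositive descent terms forces $\overline{\mb{x}}^{k+1}-\overline{\mb{x}}^{k-\overline{\tau}}\to\mb{0}$.'' Rearranged, \eqref{eq_loc_opt} reads $\|\overline{\mb{x}}^{k+1}-\mb{x}^*\|^2\leq\|\overline{\mb{x}}^{k-\overline{\tau}}-\mb{x}^*\|^2-\|\overline{\mb{x}}^{k+1}-\overline{\mb{x}}^{k-\overline{\tau}}\|^2-2c(\mb{z}^{k+1}-\mb{y}^*)^\top\mb{e}_x^{k+1}$, and the last term is sign-indefinite; from the ISS step you only know it is \emph{bounded}, not summable or eventually one-signed, so the telescoping sum along each interleaved subsequence does not yield $\sum_k\|\overline{\mb{x}}^{k+1}-\overline{\mb{x}}^{k-\overline{\tau}}\|^2<\infty$. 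Consequently the bootstrap never starts: you cannot extract $\overline{d}^k\to0$ (hence a vanishing forcing term) before controlling $\mb{e}_x^{k}$, and you cannot drive $\mb{e}_x^{k}\to\mb{0}$ by ISS before the forcing term vanishes --- exactly the circularity you flag, but reordering the steps does not break it. The missing ingredient is the paper's (i.e., \cite{falsone2020tracking}'s) resolution: the specific PD Lyapunov matrix $Q$ in \eqref{eq_Q}, with $W_{inv}$ replaced by $(\mb{I}_n-\widetilde{W}^{\frac{1}{\overline{\tau}+1}})^{-1}$ for the delayed case, is chosen so that the $Q$-weighted identity \eqref{eq_ineq_sych} for the \emph{combined} quantity $\mb{1}_2\otimes c(\mb{z}^{k}-\mb{y}^*)-\mb{e}^{k}$ generates cross terms that absorb the indefinite term of \eqref{eq_loc_opt} when the two relations are added, leaving a single functional that is nonincreasing up to telescoping. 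Only after that cancellation do $\overline{\mb{x}}^{k+1}-\overline{\mb{x}}^{k-\overline{\tau}}\to\mb{0}$ and the convergence of $\{\|\mb{x}^{k}-\mb{x}^*\|^2+c^2\|\mb{z}^{k}-\mb{y}^*\|^2\}$ follow; without constructing or importing such a $Q$, your descent argument does not close.
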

\begin{proof}
   To prove the convergence, from \cite[Appendix]{falsone2020tracking}, the following PD matrix is a valid Lyapunov matrix for the case of $\overline{\tau} = 0$,
   \begin{align} \label{eq_Q}
       Q = \left( \begin{array}{cc}
            2\mb{I}_n &  W_{inv} -2\mb{I}_n \\
            W_{inv} -2\mb{I}_n  & W_{inv}^{2}-2W_{inv} +2\mb{I}_n 
       \end{array} \right),
   \end{align}
with $W_{inv} = (\mb{I}_n-\widetilde{W})^{-1}$. 
For $\overline{\tau} \geq 1$, modify this PD $Q$ matrix by using $W_{inv} = (\mb{I}_n-\widetilde{W}^{\frac{1}{\overline{\tau}+1}})^{-1}$ in \eqref{eq_Q} instead, and the proof of Lyapunov stability under homogeneous delays $\overline{\tau}$ follows. This implies that,  

\small\begin{align} \nonumber
    \| \mb{1}_2 \otimes &c(\mb{z}^{k+1} - \mb{y}^*)- \mb{e}^{k+1}\|^2_Q = \| \mb{1}_2 \otimes c(\mb{z}^{k-\overline{\tau}} - \mb{y}^*)- \mb{e}^{k-\overline{\tau}}\|^2_Q \\ \label{eq_ineq_sych}
    &+2 (\mb{1}_2 \otimes c(\mb{z}^{k-\overline{\tau}} - \mb{y}^*))^\top Q (I_n -F)\mb{e}^{k-\overline{\tau}} -  \|  \mb{e}^{k-\overline{\tau}}\|^2_Q.
\end{align} \normalsize
Its combination with  Lemma~\ref{lem_local_opt} proves the convergence of the sequence $\{\|\mb{x}^{k+1} - \mb{x}^* \|^2 + c^2 \|\mb{z}^{k+1} - \mb{y}^* \|^2 \}$. 
\end{proof}


\begin{corollary} \label{cor_pw}
The \textit{stable} error dynamics \eqref{eq_F} in Lemma~\ref{lem_conv_sync} can be reshaped into its augmented form by replacing 
\begin{equation} \label{eq_pw_sync}
	\widetilde{PW} = \left( 
	\begin{array}{cccccc}
		\mb{0}_{n\times n} &  \mb{0}_{n\times n}  &  \hdots &  \mb{0}_{n\times n} & \widetilde{W}  \\
		\mb{I}_n &   \mb{0}_{n\times n}  &\hdots  & \mb{0}_{n\times n} & \mb{0}_{n\times n} \\
		\mb{0}_{n\times n} & \mb{I}_n &   \hdots  & \mb{0}_{n\times n} & \mb{0}_{n\times n}  \\
		\vdots & \vdots &  \ddots & \vdots & \vdots \\
		\mb{0}_{n\times n} & \mb{0}_{n\times n} &  \hdots & \mb{I}_n & \mb{0}_{n\times n}
	\end{array}
	\right)
\end{equation}
of size $\overline{\tau}+1$ into Eq. \eqref{eq_y_vect}-\eqref{eq_x_aug}. 
\end{corollary}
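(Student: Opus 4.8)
The plan is to recognize the homogeneous-delay recursion \eqref{eq_F} of Lemma~\ref{lem_conv_sync} as the degenerate instance of the heterogeneous augmented formulation and to exhibit the companion matrix \eqref{eq_pw_sync} as the corresponding modified augmented weight. First I would note that a homogeneous delay $\overline{\tau}$ is the case of Assumption~\ref{ass_delay} in which every link carries exactly the same delay, so that the indicator collapses to $\mc{I}_{k-\overline{\tau},ij}(\overline{\tau})=1$ and $\mc{I}_{k-r,ij}(r)=0$ for all $r<\overline{\tau}$. By Remark~\ref{rem_augP} this forces the delay matrices to satisfy $P_{\overline{\tau}}\circ W=W$ and $P_r\circ W=\mb{0}_{n\times n}$ for every $r<\overline{\tau}$, so only the last block of the top row of the general augmented matrix \eqref{eq_aug_WA} survives.

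Substituting these degenerate delay matrices into \eqref{eq_aug_WA} therefore produces a companion (shift-register) matrix whose top-right block equals $W$, whose remaining top-row blocks vanish, and whose subdiagonal blocks are $\mb{I}_n$. Because the consensus correction $\overline{P1}$ acts only on the active top block-row, subtracting it per \eqref{eq_pw_tilde} replaces the surviving block $W$ by $\widetilde{W}=W-\tfrac{1}{n}\mb{1}_{n\times n}$ while leaving the memory (identity) blocks untouched; this yields precisely the matrix $\widetilde{PW}$ displayed in \eqref{eq_pw_sync}. I would then insert this companion $\widetilde{PW}$, blockwise, into the $2\times 2$ outer structure of \eqref{eq_error_compact} (equivalently into \eqref{eq_d_aug}-\eqref{eq_x_aug}), which reproduces the same upper-triangular coupling between the $x$- and $d$-errors carried by $F$ in Lemma~\ref{lem_conv_sync}.

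The verification step is to unroll the resulting first-order augmented recursion and recover \eqref{eq_F}. Reading off the top block of $\widetilde{PW}\,\underline{\mb{e}}^{k}$ gives $\widetilde{W}$ acting on the oldest stored component, while each identity subdiagonal block merely shifts the delay window forward. Since the shift register carries a block to the bottom of the stack after $\overline{\tau}+1$ applications, the oldest component of $\underline{\mb{e}}^{k}$ equals $\mb{e}^{k-\overline{\tau}}$, so the top block of $\underline{\mb{e}}^{k+1}$ satisfies $\mb{e}^{k+1}=\widetilde{W}\mb{e}^{k-\overline{\tau}}+(\text{forcing})$, which is exactly the first-block content of \eqref{eq_F}; the remaining blocks are trivial shift identities carrying no extra dynamics. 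Stability of the reshaped system is then immediate from Lemma~\ref{lem_synch}, which gives $\rho(\widetilde{PW})=\rho(\widetilde{W})^{1/(\overline{\tau}+1)}<1$, so the augmented first-order system is Schur-stable exactly when the delayed one is.

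The main obstacle I anticipate is the index bookkeeping needed to match the forcing terms across the two representations. The delayed form \eqref{eq_F} is driven by the window-spanning increment $c(\mb{z}^{k+1}-\mb{z}^{k-\overline{\tau}})$, whereas the augmented form \eqref{eq_error_compact} is driven by the consecutive one-step increment $c(\underline{\mb{z}}^{k+1}-\underline{\mb{z}}^{k})$ of the stacked variable. I would reconcile these by verifying that the forcing enters only the active top block and that, once the one-step increment of the augmented $\underline{\mb{z}}$ is propagated through the $\overline{\tau}+1$ shift stages of the companion register, its contribution to the top block aggregates into the window-spanning difference $\mb{z}^{k+1}-\mb{z}^{k-\overline{\tau}}$ and aligns with the $\mb{1}_2\otimes$ pattern of \eqref{eq_F}. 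Confirming this alignment, together with the blockwise cancellation of $\overline{P1}$ off the memory rows, is where the argument requires the most care.
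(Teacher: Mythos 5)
Your proposal is correct and follows the same route the paper intends: the corollary is stated without an explicit proof precisely because it amounts to specializing the heterogeneous augmented matrix \eqref{eq_aug_WA} to the case $P_{\overline{\tau}}\circ W = W$, $P_r\circ W=\mb{0}_{n\times n}$ for $r<\overline{\tau}$, applying the modification \eqref{eq_pw_tilde}, and reading off the companion (shift-register) structure whose top block reproduces \eqref{eq_F} and whose spectral radius is controlled by Lemma~\ref{lem_synch}. Your added care about telescoping the one-step forcing increments $c(\underline{\mb{z}}^{k+1}-\underline{\mb{z}}^{k})$ into the window-spanning difference $c(\mb{z}^{k+1}-\mb{z}^{k-\overline{\tau}})$ is exactly the bookkeeping the paper leaves implicit.
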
 

\begin{lemma}  \label{lem_pwi}
    Define the \textit{modified} augmented matrix $\widetilde{PW}_r$ as follows: keep $P_r$ and put all other $P_{l} = \mb{0}_{n\times n}$, $l\neq r$, $l \in \{1,\dots,\overline{\tau}\}$ in  \eqref{eq_aug_WA}, i.e., the first row block as
    \small \begin{align} \label{eq_aug_WAi}
	\left[\widetilde{PW}_r\right]_{1:n,1:n(\overline{\tau}+1)} = \left( 
	\begin{array}{cccccc}
		\mb{0}_{n\times n}  &  \hdots & P_{r} \circ \widetilde{W}  & \hdots & \mb{0}_{n\times n} 
	\end{array}	
	\right).
\end{align} \normalsize
    Replace $\mb{I}_{n}$ blocks in $\widetilde{PW}$ by $\frac{1}{\overline{\tau}+1} \mb{I}_{n}$. Then, $\widetilde{PW} = \sum_{r=0}^{\overline{\tau}} \widetilde{PW}_r$ and $\rho(\widetilde{PW}_r) \leq (\frac{\rho(\widetilde{W})}{\overline{\tau}+1})^{\frac{1}{r+1}} < 1$ for $\rho(\widetilde{W}) < 1$.
\end{lemma}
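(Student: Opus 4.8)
The plan is to treat the two assertions separately, disposing of the additive decomposition first (it is purely structural) and then attacking the spectral bound (the substantive part). For $\widetilde{PW}=\sum_{r=0}^{\overline{\tau}}\widetilde{PW}_r$ I would simply add the blocks column by column. In the first block-row, $\widetilde{PW}_r$ carries $P_r\circ\widetilde{W}$ in block-column $r+1$ and zeros elsewhere, so summing over $r=0,\dots,\overline{\tau}$ reproduces exactly the first block-row $[\,P_0\circ\widetilde{W}\ \cdots\ P_{\overline{\tau}}\circ\widetilde{W}\,]$ of $\widetilde{PW}$. On the sub-diagonal every $\widetilde{PW}_r$ contributes $\frac{1}{\overline{\tau}+1}\mb{I}_n$, and since there are exactly $\overline{\tau}+1$ summands these add up to $\mb{I}_n$, matching the shift blocks of $\widetilde{PW}$. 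This establishes the identity with no spectral input.

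For the spectral radius I would exploit the companion/shift block structure of $\widetilde{PW}_r$ exactly as in Lemma~\ref{lem_synch}. Writing an eigenvector as $\underline{\mb{v}}=(\mb{v}_1;\dots;\mb{v}_{\overline{\tau}+1})$ and setting $A_r=P_r\circ\widetilde{W}$, $\mu=\frac{1}{\overline{\tau}+1}$, the relation $\widetilde{PW}_r\underline{\mb{v}}=\lambda\underline{\mb{v}}$ forces $\mb{v}_j=(\mu/\lambda)^{\,j-1}\mb{v}_1$ through the shift rows (for $\lambda\neq0$) and $A_r\mb{v}_{r+1}=\lambda\mb{v}_1$ in the top row. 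Eliminating the intermediate blocks gives $A_r\mb{v}_1=\lambda^{r+1}\mu^{-r}\mb{v}_1$, so every nonzero eigenvalue obeys $\lambda^{r+1}=\mu^{r}\nu$ for some $\nu\in\mathrm{spec}(A_r)$; equivalently, in the characteristic-polynomial form of Lemma~\ref{lem_synch}, $\sigma_{\widetilde{PW}_r}(\lambda)=\lambda^{\,n(\overline{\tau}-r)}\,\sigma_{A_r}\!\big(\lambda^{r+1}\mu^{-r}\big)$ up to a sign. Hence
\begin{align}\label{eq_rho_pwr_plan}
\rho(\widetilde{PW}_r)=\Big(\tfrac{1}{\overline{\tau}+1}\Big)^{\frac{r}{r+1}}\rho(P_r\circ\widetilde{W})^{\frac{1}{r+1}}.
\end{align}

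The remaining step, and the main obstacle, is to pass from $\rho(P_r\circ\widetilde{W})$ to $\rho(\widetilde{W})$ and to reconcile the exponents with the stated bound. I would argue that the Hadamard mask by the symmetric $0$-$1$ matrix $P_r$ does not increase the spectral radius, i.e. $\rho(P_r\circ\widetilde{W})\le\rho(\widetilde{W})$; substituting this into \eqref{eq_rho_pwr_plan} and using $\frac{1}{\overline{\tau}+1}\le1$ (so a larger exponent only shrinks the factor) gives $\rho(\widetilde{PW}_r)\le(\tfrac{1}{\overline{\tau}+1})^{\frac{r}{r+1}}\rho(\widetilde{W})^{\frac{1}{r+1}}\le(\tfrac{\rho(\widetilde{W})}{\overline{\tau}+1})^{\frac{1}{r+1}}$ for $r\ge1$, and the strict bound $\rho(\widetilde{PW}_r)<1$ then follows from $\rho(\widetilde{W})<1$ (Lemma~\ref{lem_synch}) together with $\overline{\tau}+1\ge1$. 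The two delicate points I expect to fight with are: \emph{(i)} justifying $\rho(P_r\circ\widetilde{W})\le\rho(\widetilde{W})$ rigorously, since $\widetilde{W}=W-\frac{1}{n}\mb{1}_{n\times n}$ is sign-indefinite and Hadamard masking is \emph{not} a principal-submatrix operation, so Perron--Frobenius monotonicity does not apply and one must instead lean on the symmetry and the PSD structure inherited from $W$; and \emph{(ii)} the endpoint $r=0$, where \eqref{eq_rho_pwr_plan} reduces to $\rho(\widetilde{PW}_0)=\rho(P_0\circ\widetilde{W})$ (the matrix is block-lower-triangular), so that only the weaker $\rho(\widetilde{PW}_0)\le\rho(\widetilde{W})<1$ is immediate and recovering the stated $\frac{1}{\overline{\tau}+1}$ factor would require the sharper estimate $\rho(P_0\circ\widetilde{W})\le\frac{\rho(\widetilde{W})}{\overline{\tau}+1}$.
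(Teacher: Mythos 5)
Your plan is essentially the paper's own route: decompose $\widetilde{PW}$ block-column by block-column, reduce $\rho(\widetilde{PW}_r)$ to a characteristic-polynomial identity in the spirit of Lemma~\ref{lem_synch}, and finish with $\rho(P_r\circ\widetilde{W})\le\rho(\widetilde{W})$. The one substantive difference is that you actually carry out the companion-block elimination, whereas the paper imports the identity from Eq.~(25) of \cite{delay_est} in the form $\sigma_{\widetilde{PW}_r}(\lambda)\propto\lambda^{n(\overline{\tau}-r)}\sigma_{P_r\circ\widetilde{W}}\big(((\overline{\tau}+1)\lambda)^{r+1}\big)$, i.e.\ $\lambda^{r+1}=\mu^{r+1}\nu$ rather than your $\lambda^{r+1}=\mu^{r}\nu$. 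Your version is the correct one for the matrix as defined: for $n=1$, $\overline{\tau}=r=1$ the matrix $\left(\begin{smallmatrix}0&a\\ \mu&0\end{smallmatrix}\right)$ has $\lambda^2=\mu a$, not $\mu^2 a$. The extra factor of $\mu$ in the cited identity would correspond to scaling the \emph{first} block row by $\mu$ as well, which would break the decomposition $\widetilde{PW}=\sum_r\widetilde{PW}_r$ that you (correctly) verify; the paper cannot have both, and this discrepancy is exactly why its stated constant looks uniform in $r$ while your honest computation says otherwise.

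Both delicate points you flag are genuine, and the paper closes neither. At $r=0$ the claimed bound is in fact false as stated: $\widetilde{PW}_0$ is block lower triangular with diagonal blocks $P_0\circ\widetilde{W},\mb{0}_{n\times n},\dots,\mb{0}_{n\times n}$, so $\rho(\widetilde{PW}_0)=\rho(P_0\circ\widetilde{W})$ and no factor $\frac{1}{\overline{\tau}+1}$ can appear; e.g.\ on a complete graph with all delays zero (allowed by Assumption~\ref{ass_delay} for any $\overline{\tau}\ge1$) one gets $\rho(\widetilde{PW}_0)=\rho(\widetilde{W})>\rho(\widetilde{W})/(\overline{\tau}+1)$. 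Only the weaker conclusion $\rho(\widetilde{PW}_r)\le\rho(\widetilde{W})^{\frac{1}{r+1}}<1$ survives for all $r$, which is fortunately all that Lemma~\ref{lem_asynch} and Theorem~\ref{thm_conv} consume downstream. As for $\rho(P_r\circ\widetilde{W})\le\rho(\widetilde{W})$, the paper asserts it ``following Remark~\ref{rem_augP}'' with no argument; your caution is warranted, because Hadamard masking of a PSD matrix by a symmetric $0$--$1$ pattern is not a principal-submatrix operation and can strictly increase the spectral radius in general, so any rigorous proof must exploit the specific structure here ($\widetilde{W}=W-\frac1n\mb{1}_{n\times n}$ with the $P_r$ partitioning the support of $W$ as in \eqref{eq_aug_WA}). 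In short: your proposal reproduces the paper's argument faithfully, but when executed carefully it reveals that the lemma's constant is wrong at $r=0$ and that one inequality needed by both you and the authors is left unproven.
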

\begin{proof}
    The proof follows similar to the proof of Lemma~5 in \cite{delay_est}. Note that the lower-triangular blocks of $\widetilde{PW}_r$ are scaled by $\frac{1}{\overline{\tau}+1}$.
    Let $\sigma_{P_r \circ \widetilde{W}}(\lambda)$ and $\sigma_{\widetilde{PW}_r}(\lambda)$ denote the characteristic equation of matrices $P_r \circ \widetilde{W}$ and $\widetilde{PW}_r$. Then, from Eq.~(25) in \cite[Appendix]{delay_est}, 
    \begin{align}
        \sigma_{\widetilde{PW}_r}(\lambda) \propto \lambda^{n(\overline{\tau}-r)} \sigma_{P_r \circ \widetilde{W}} \left( \left( (\overline{\tau}+1)\lambda\right)^{r+1}\right).
   \end{align}
    Following Remark~\ref{rem_augP} we have $\rho(P_r \circ \widetilde{W}) \leq \rho(\widetilde{W})$, and the proof follows.
\end{proof}

\begin{lemma}[Heterogeneous Delays] \label{lem_asynch}
    Let Assumptions \ref{ass_net} and \ref{ass_delay} hold. Then, $\rho(\widetilde{PW}) \leq \rho(\widetilde{W})^{\frac{1}{\overline{\tau}+1}} < 1$ for $\rho(\widetilde{W}) < 1$.
\end{lemma}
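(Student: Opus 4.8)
The plan is to pass through the characteristic polynomial of the block‑companion matrix $\widetilde{PW}$ in \eqref{eq_aug_WA}--\eqref{eq_pw_tilde}, reduce the eigenvalue problem to a matrix fixed‑point condition, and then pin the dominant root against the homogeneous case already settled in Lemma~\ref{lem_synch}. Expanding $\det(\lambda\mb{I}-\widetilde{PW})$ along the shifted identity blocks gives $\sigma_{\widetilde{PW}}(\lambda)=\det\!\big(\lambda^{\overline{\tau}+1}\mb{I}_n-\sum_{r=0}^{\overline{\tau}}\lambda^{\overline{\tau}-r}(P_r\circ\widetilde{W})\big)$, so every nonzero eigenvalue $\lambda$ of $\widetilde{PW}$ must satisfy $\det\!\big(\mb{I}_n-N(\lambda)\big)=0$, where $N(\lambda):=\sum_{r=0}^{\overline{\tau}}\lambda^{-(r+1)}(P_r\circ\widetilde{W})$. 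Because the $P_r$ partition the link set (each ordered pair carries its single delay $\tau_{ij}\le\overline{\tau}$), the entries collapse to $N(\lambda)_{ij}=\lambda^{-(\tau_{ij}+1)}\widetilde{W}_{ij}$; that is, $N(\lambda)$ is $\widetilde{W}$ with each entry rescaled according to the delay of the corresponding edge, and in the homogeneous limit $\tau_{ij}\equiv\overline{\tau}$ it reduces to $\lambda^{-(\overline{\tau}+1)}\widetilde{W}$, recovering $\rho(\widetilde{PW})=\rho(\widetilde{W})^{1/(\overline{\tau}+1)}$ of Lemma~\ref{lem_synch}.

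First I would reduce the claim to showing that no $\lambda$ with $|\lambda|^{\overline{\tau}+1}>\rho(\widetilde{W})$ can make $\mb{I}_n-N(\lambda)$ singular. Since the target radius is below $1$, for each edge $|\lambda|^{-(\tau_{ij}+1)}\le|\lambda|^{-(\overline{\tau}+1)}$, so entrywise $|N(\lambda)|\le|\lambda|^{-(\overline{\tau}+1)}\,|\widetilde{W}|$. Monotonicity of the spectral radius on nonnegative matrices then gives $\rho(N(\lambda))\le\rho(|N(\lambda)|)\le|\lambda|^{-(\overline{\tau}+1)}\rho(\widetilde{W})$, the final step being exactly the comparison with the worst‑case homogeneous companion (all links padded to delay $\overline{\tau}$) whose radius is fixed by Lemma~\ref{lem_synch}. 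Hence $|\lambda|^{\overline{\tau}+1}>\rho(\widetilde{W})$ forces $\rho(N(\lambda))<1$, making $\mb{I}_n-N(\lambda)$ nonsingular so that $\lambda$ is not an eigenvalue; this yields $\rho(\widetilde{PW})\le\rho(\widetilde{W})^{1/(\overline{\tau}+1)}$, which is strictly $<1$ whenever $\rho(\widetilde{W})<1$. As a cross‑check I would reassemble the same rate from the additive splitting $\widetilde{PW}=\sum_{r=0}^{\overline{\tau}}\widetilde{PW}_r$ of Lemma~\ref{lem_pwi}, using $\rho(P_r\circ\widetilde{W})\le\rho(\widetilde{W})$ from Remark~\ref{rem_augP}; there the slowest ($r=\overline{\tau}$) mode attains the homogeneous rate and dominates, reproducing the exponent $1/(\overline{\tau}+1)$, which mirrors the technique behind Lemma~5 of \cite{delay_est}.

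The main obstacle is the middle inequality: passing from the mixed‑delay scaling in $N(\lambda)$ to a bound governed by $\rho(\widetilde{W})$ rather than by $\rho(|\widetilde{W}|)$, since $\widetilde{W}=W-\tfrac{1}{n}\mb{1}_{n\times n}$ is sign‑indefinite and taking entrywise moduli can inflate the spectral radius. The mitigating structure is that $N(\lambda)$ is supported only on the graph's links and diagonal, so the off‑support negative entries of $\widetilde{W}$ never enter the companion form through any $P_r$, and the comparison matrix to control is the homogeneous companion built from $\widetilde{W}$ restricted to that support, whose radius is exactly $\rho(\widetilde{W})^{1/(\overline{\tau}+1)}$ by Lemma~\ref{lem_synch}. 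Verifying that this support restriction keeps the modulus comparison tight, so that the heterogeneous configuration is genuinely dominated by the all‑$\overline{\tau}$ homogeneous one, is the delicate point where I would concentrate the rigorous effort.
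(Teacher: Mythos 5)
Your reduction of the eigenvalue problem to $\det\bigl(\mb{I}_n-N(\lambda)\bigr)=0$ with $N(\lambda)=\sum_{r=0}^{\overline{\tau}}\lambda^{-(r+1)}(P_r\circ\widetilde{W})$ is a legitimate and more explicit route than the paper's own proof, which simply combines Lemma~\ref{lem_synch}, Lemma~\ref{lem_pwi} and \cite[Lemma~5]{delay_est} with the monotonicity of $\rho(\widetilde{W})^{1/(r+1)}$ in $r$. However, the step you yourself flag as delicate is a genuine gap, not a technicality, and as written it fails. The chain $\rho(N(\lambda))\le\rho(|N(\lambda)|)\le|\lambda|^{-(\overline{\tau}+1)}\rho\bigl(|\widetilde{W}|\bigr)$ is what entrywise monotonicity of the spectral radius on nonnegative matrices actually delivers; it does \emph{not} deliver $|\lambda|^{-(\overline{\tau}+1)}\rho(\widetilde{W})$. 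Since $\widetilde{W}=W-\tfrac{1}{n}\mb{1}_{n\times n}$ owes its contraction $\rho(\widetilde{W})<1$ entirely to sign cancellation (the deflation of the Perron eigenvalue of $W$), taking entrywise moduli destroys exactly that cancellation: $\rho(|\widetilde{W}|)$ is generically strictly larger than $\rho(\widetilde{W})$ and can exceed $1$ (row sums of $|\widetilde{W}|$ approach $2(1-\tfrac1n)$ for sparse $W$), in which case your nonsingularity criterion gives no information at the target radius $\rho(\widetilde{W})^{1/(\overline{\tau}+1)}$. The ``support restriction'' you invoke does not repair this, because the entries of $P_r\circ\widetilde{W}$ on the link support are themselves sign-indefinite, so the modulus comparison is lossy there as well.

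Your fallback cross-check has the same defect in a different guise: from the splitting $\widetilde{PW}=\sum_{r=0}^{\overline{\tau}}\widetilde{PW}_r$, per-term bounds $\rho(\widetilde{PW}_r)<1$ do not bound $\rho\bigl(\sum_r\widetilde{PW}_r\bigr)$, since the spectral radius is neither subadditive nor monotone under addition for sign-indefinite matrices; asserting that the $r=\overline{\tau}$ mode ``dominates'' presupposes the conclusion. To close the argument along your line you would need a comparison that respects the symmetry rather than the sign pattern --- e.g.\ exploiting that each $P_r\circ\widetilde{W}$ is symmetric (Remark~\ref{rem_augP}) so that on the unit circle scale $N(\lambda)$ can be controlled by $\|P_r\circ\widetilde{W}\|\le\rho(\widetilde{W})$-type operator-norm estimates, or a Lyapunov/quadratic-form argument on the augmented system as in Lemma~\ref{lem_conv_sync} --- rather than a Perron--Frobenius modulus bound, which is structurally the wrong tool for $\widetilde{W}$.
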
 
\begin{proof}
    The proof follows from Lemma~\ref{lem_synch} and \cite[Lemma~5]{delay_est}. Recall that, for  heterogeneous delays $\tau_{ij} = r \leq \overline{\tau}$, function $\rho(\widetilde{W})^{\frac{1}{r+1}}$ is an increasing function of $r$ for $\rho(\widetilde{W}) < 1$, i.e., $\rho(\widetilde{W})^{\frac{1}{r+1}} \leq \rho(\widetilde{W})^{\frac{1}{\overline{\tau}+1}} < 1$.
    Therefore, from the definition of $\widetilde{PW}$ in \eqref{eq_aug_WA}, the proof follows for  general heterogeneous delays $\tau_{ij} = r < \overline{\tau}$.
\end{proof}

Using Corollary~\ref{cor_xyz}-\ref{cor_pw} and Lemma~\ref{lem_pwi}-\ref{lem_asynch} one can relate the convergence and optimality of the homogeneous and heterogeneous cases. Then, \textbf{Proof of Theorem \ref{thm_conv}} is as follows.  

\begin{proof}
The proof follows similar to Lemma~\ref{lem_conv_sync}. First, rewrite the error dynamics~\eqref{eq_error_compact} as,
   \begin{align} \label{eq_F0}
        \underline{\mb{e}}^{k+1} = \underline{F} \underline{\mb{e}}^{k} +  \mb{1}_2 \otimes c(\underline{\mb{z}}^{k+1} - \underline{\mb{z}}^{k})
    \end{align}
    with
   \begin{align}
      \underline{F} = \left( \begin{array}{cc}
            \widetilde{PW} &  \widetilde{PW}\\
            \mb{0}_{n} & \widetilde{PW}
      \end{array} \right),~ \underline{\mb{e}}^{k} = \left( \begin{array}{cc}
            \underline{\mb{e}}_k^x \\ \underline{\mb{e}}_k^d
      \end{array} \right).
    \end{align} 
   From Lemma~\ref{lem_asynch},   $\rho(\widetilde{W})^{\frac{1}{\overline{\tau}+1}} < 1$ and the eigenvalues lie inside the unit circle; thus, $\rho(\underline{F}) < 1$ which implies stable error dynamics in the absence of $\underline{\mb{z}}$ terms. Following the input-to-state stability as in Lemma~\ref{lem_conv_sync} in the presence of $\underline{\mb{z}}$ terms, rewrite $\underline{F} =  \sum_{r=0}^{\overline{\tau}} \underline{F}_r$ with 
   \begin{align}
      \underline{F}_r = \left( \begin{array}{cc}
            \widetilde{PW}_r &  \widetilde{PW}_r\\
            \mb{0}_{n} & \widetilde{PW}_r
      \end{array} \right)
   \end{align} 
   and $\widetilde{PW}_r$ defined as in Lemma~\ref{lem_pwi}. Then, $\underline{F}_r$s follow similar structure as in Corollary~\ref{cor_pw} and Lemma~\ref{lem_pwi} given for the homogeneous case. Then, the proof of Lyapunov stability and convergence can be restated similar to the homogeneous case for $r=0,\dots,\overline{\tau}$ and proof of part 1) follows similar to Lemma~\ref{lem_conv_sync}. Recall that part 1) implies that as $k\rightarrow \infty$ the augmented vectors $\underline{\mb{d}}^k$, $\underline{\mb{x}}^k$, and $\underline{\mb{y}}^k$ respectively converge to $\mb{1}_{\overline{\tau}+1} \otimes \mb{d}^k$, $\mb{1}_{n(\overline{\tau}+1)} \otimes \overline{x}^k$, and $\mb{1}_{\overline{\tau}+1} \otimes \mb{y}^k$. This follows the definition of error variables and $\overline{d}^k$ and implies that Corollary~\ref{cor_xyz} holds. Then, in limit as $k\rightarrow \infty$, part 2) and 3) can be restated for non-augmented vector variables $\mb{x}^k$, $\mb{x}^*$, $\mb{y}^*$ and $\mb{z}^k$. The proof of part 2) follows from Corollary~\ref{cor_xyz}. 
   The proof of part 3) follows from \cite[Lemma~4.1]{bertsekas1989parallel} for the \textsf{Parallel-ADMM} solution \eqref{eq_paral_y}-\eqref{eq_paral_x}, \cite[Theorem~2]{falsone2020tracking} for the non-delayed \textsf{Distributed-Parallel-ADMM} \eqref{eq_nodelay_y}-\eqref{eq_nodelay_x}, and Lemma~\ref{lem_local_opt}-\ref{lem_conv_sync} for the homogeneous case. The sketch of the proof follows as: \emph{(i)} adding and subtracting $(\underline{\mb{y}}^{k+1} - \underline{\mb{y}}^*)^\top \overline{\mb{x}}_{\overline{\tau}}^{k+1}$ to the augmented version of \eqref{eq_loc_opt_proof}, \emph{(ii)} $(\underline{\mb{y}}^{k+1} - \underline{\mb{y}}^*)^\top \overline{\mb{x}}_{\overline{\tau}}^{k+1} =  (\underline{\mb{y}}^{k+1} - \mb{b})^\top \overline{\mb{x}}_{\overline{\tau}}^{k+1} = n(\overline{\tau}+1) \overline{x}^{k+1} \overline{d}^{k+1}$ which results from $\mb{1}_n^\top\mb{y}^*=b$, \emph{(iii)} the boundedness of $\|\underline{\mb{e}}_d^k\|$ and $\overline{d}^k$ implies bounded $\|\underline{\mb{y}}^k\|$, $\|\mb{z}^k\|$ from part 2), \emph{(iv)} $\lim_{k\rightarrow \infty} \overline{x}^k \overline{d}^k = 0$ and $\lim_{k\rightarrow \infty} (\underline{\mb{y}}^{k+1} - \underline{\mb{y}}^*)^\top \underline{e}^{k+1}_x = 0$, \emph{(v)} $\lim_{k\rightarrow \infty} \sup \Phi(\mb{y}) \leq \Phi(\mb{y}^*)$ and $\Phi(\mb{y}^{k}) \geq \Phi(\mb{y}^*)$ for any feasible $\mb{y}$. Combining the steps \emph{(iv)}-\emph{(v)} and for sufficiently large $k$ to reach feasibility, $\lim_{k\rightarrow \infty} \mb{y}^k = \mb{y}^*$. The proof for local dual variables $x_i^k$ follows similarly: \emph{(i)} in limit as $k\rightarrow \infty$ Corollary~\ref{cor_xyz} holds, and from the definition of $\underline{\pmb{\delta}}^k$ and $\underline{\pmb{\eta}}^k$ and some algebraic manipulations, Eq.~\eqref{eq_bertsek_2}-\eqref{eq_bertsek} can be rewritten as $\phi_i(y_i^{k+1}) +  (y_i^{k+1} - b_i) x_i^{k+1} \leq \phi_i(y_i) +  (y_i - b_i) x_i^{k+1}$ for all $y_i \in \Y_i$, \emph{(ii)} this, by strong duality in Assumption~\ref{ass_dual}, implies that $f_i(x_i^{k+1}) = \phi_i(y_i^{k+1}) + x_i^{k+1}(y_i^{k+1}-b_i)$ and having $\lim_{k\rightarrow \infty} \sum_{i=1}^n f_i(x_i^{k}) = \Phi(\mb{y}^*) \geq \max_{x\in \mathbb{R}} \sum_{i=1}^n f_i(x)$ and $\lim_{k\rightarrow \infty} \sum_{i=1}^n f_i(x_i^{k}) \leq \phi_i(y_i) + (y_i-b_i) \widetilde{x}$ with  $\widetilde{x} := \lim_{k\rightarrow 0} x_i^{k} $ from part 1), and \emph{(iii)} from part 2), $\lim_{k \rightarrow \infty} \{\| \overline{\mb{x}} - \mb{x}^* \|^2 + c^2 \|\mb{z}^{k} - \mb{y}^* \|^2 \} = 0$. Thus,
   $$\lim_{k\rightarrow \infty} \underline{\mb{x}}^k = \mb{1}_{\overline{\tau}+1} \overline{\mb{x}} = \mb{1}_{n(\overline{\tau}+1)} x^* = \underline{\mb{x}}^*,$$
   and the proof follows.
\end{proof}  
   
\bibliographystyle{IEEEtran}
\bibliography{bibliography}

\end{document}